\newtheorem{theorem}{Theorem}[section]
\newtheorem{proposition}[theorem]{Proposition}
\newtheorem{corollary}[theorem]{Corollary}
\newtheorem{remark}{Remark}
\newcommand{\al}{\alpha}
\newcommand{\bt}{\beta}
\newcommand{\la}{\lambda}
\newcommand{\s}{\sigma}
\newcommand{\be}{\begin{equation}}
\newcommand{\ee}{\end{equation}}
\newcommand{\bea}{\begin{eqnarray}}
\newcommand{\eea}{\end{eqnarray}}
\newcommand{\no}{\nonumber}
\numberwithin{equation}{section}
\begin{document}

\title{Hankel Determinant and Orthogonal Polynomials for a Perturbed Gaussian Weight: from Finite $n$ to Large $n$ Asymptotics}
\author{Chao Min\thanks{School of Mathematical Sciences, Huaqiao University, Quanzhou 362021, China; e-mail: chaomin@hqu.edu.cn}\: and Yang Chen\thanks{Department of Mathematics, Faculty of Science and Technology, University of Macau, Macau, China; e-mail:\qquad\qquad\qquad\qquad yangbrookchen@yahoo.co.uk}}


\date{\today}
\maketitle
\begin{abstract}
We study the monic polynomials orthogonal with respect to a symmetric perturbed Gaussian weight
$$
w(x;t):=\mathrm{e}^{-x^2}\left(1+t\: x^2\right)^\lambda,\qquad x\in \mathbb{R},
$$
where $t> 0,\;\lambda\in \mathbb{R}$. This weight is related to the single-user MIMO systems in information theory. It is shown that the recurrence coefficient $\beta_n(t)$ is related to a particular Painlev\'{e} V transcendent, and the sub-leading coefficient $\mathrm{p}(n,t)$ satisfies the Jimbo-Miwa-Okamoto $\sigma$-form of the Painlev\'{e} V equation. Furthermore, we derive the second-order difference equations satisfied by $\beta_n(t)$ and $\mathrm{p}(n,t)$, respectively. This enables us to obtain the large $n$ full asymptotic expansions for $\beta_n(t)$ and $\mathrm{p}(n,t)$ with the aid of Dyson's Coulomb fluid approach. We also consider the Hankel determinant $D_n(t)$, generated by the perturbed Gaussian weight. It is found that $H_n(t)$, a quantity allied to the logarithmic derivative of $D_n(t)$, can be expressed in terms of $\beta_n(t)$ and $\mathrm{p}(n,t)$. Based on this result, we obtain the large $n$ asymptotic expansion of $H_n(t)$ and then that of the Hankel determinant $D_n(t)$.
\end{abstract}

$\mathbf{Keywords}$: Orthogonal polynomials; Hankel determinant; Perturbed Gaussian weight;

Painlev\'{e} V; Coulomb fluid; Asymptotics.

$\mathbf{Mathematics\:\: Subject\:\: Classification\:\: 2020}$: 42C05, 33E17, 41A60.

\section{Introduction}
Given a positive Borel measure $\mu$ on the real line, where all the moments
$$
\mu_{j}:=\int_{\mathbb{R}}x^j d\mu(x),\qquad j=0, 1, 2,\ldots
$$
exist, it is well known that one can define the monic orthogonal polynomials $P_n(x), n=0, 1, 2,\ldots$ such that
$$
\int_{\mathbb{R}}P_m(x)P_n(x)d\mu(x)=h_n\delta_{mn},
$$
where $h_n>0$ and $\delta_{mn}$ denotes the Kronecker delta.

An important property of the orthogonal polynomials is the three-term recurrence relation
$$
xP_{n}(x)=P_{n+1}(x)+\al_n P_n(x)+\beta_{n}P_{n-1}(x),
$$
supplemented by the initial conditions
$$
P_0(x):=1,\qquad P_{-1}(x):=0.
$$
The recurrence coefficients $\al_n$ and $\bt_n$ have the integral representations:
$$
\al_n=\frac{1}{h_n}\int_{\mathbb{R}}xP_n^2(x)d\mu(x),
$$
$$
\bt_n=\frac{1}{h_{n-1}}\int_{\mathbb{R}}xP_n(x)P_{n-1}(x)d\mu(x).
$$
See \cite{Chihara,Ismail,Szego} for information on orthogonal polynomials.

For classical orthogonal polynomials, namely the Hermite, Laguerre and Jacobi polynomials, the measure $\mu(x)$ is absolutely continuous and we write $d\mu(x)=w(x)dx$. Here $w(x)$ is the weight function corresponding to the normal, gamma and beta distributions, respectively. Furthermore, the weight functions of classical orthogonal polynomials satisfy the Pearson equation (see \cite{AN})
\be\label{pe}
\frac{d}{dx}(\varrho(x)w(x))=\tau(x)w(x),
\ee
where $\varrho(x)$ is a polynomial of degree $\leq 2$ and $\tau(x)$ a polynomial of degree 1.
Semi-classical orthogonal polynomials are defined to have a weight function $w(x)$ that satisfies the Pearson equation (\ref{pe}) where $\varrho(x)$ and $\tau(x)$ are polynomials with deg $\varrho>2$ or deg $\tau\neq 1$; see \cite[Section 1.1.1]{VanAssche}.

In the present paper, we consider the monic orthogonal polynomials with respect to a symmetric perturbed Gaussian weight
\be\label{weight}
w(x)=w(x;t):=\mathrm{e}^{-x^2}\left(1+t\:x^2\right)^\la,\qquad x\in \mathbb{R}
\ee
with $t> 0,\;\la\in \mathbb{R}$. It is easy to see that (\ref{weight}) is a semi-classical weight since it satisfies the Pearson equation (\ref{pe}) with
$$
\varrho(x)=1+t\:x^2,\qquad \tau(x)=2x \left[ (\la+1)  t-1\right]-2 t\: x^3.
$$
It should be pointed out that our weight (\ref{weight}) is closely related to the weight $w(x)=x^{\al}\mathrm{e}^{-x}(x+t)^{\lambda},\;x\in \mathbb{R}^{+}, \al>-1, t>0$. The latter weight plays an important role in the single-user MIMO systems \cite{BC,Chen2012}.

Our study is motivated in part by the fact that the recurrence coefficients of semi-classical orthogonal polynomials are usually related to the solutions of Painlev\'{e} equations. This fact was first observed by Fokas et al. \cite{Fokas}, and Magnus \cite{Magnus,Magnus1} made a great contribution in this field. Other classical examples are mentioned below. For example, Basor et al. \cite{BCE} showed that the recurrence coefficients of the monic orthogonal polynomials with weight $w(x)=(1-x)^\al(1+x)^{\bt}\mathrm{e}^{-tx}, x\in [-1,1], \al, \bt>-1, t\in\mathbb{R}$, can be expressed in terms of the solution of a particular Painlev\'{e} V; Filipuk et al. \cite{Filipuk} and Clarkson et al. \cite{Clarkson} discussed the relation between recurrence coefficients of the semi-classical Laguerre polynomials with weight $w(x)=x^\al \mathrm{e}^{-x^2+tx}, x\in \mathbb{R}^{+}, \al>-1, t\in \mathbb{R}$ and the Painlev\'{e} IV equation. See also \cite{BC,Dai,FW2007,Min2020,VanAssche} for reference.


Let $P_{n}(x;t),\; n=0,1,2,\ldots$ be the monic polynomials of degree $n$ orthogonal with respect to the weight (\ref{weight}), i.e.,
\be\label{or}
\int_{-\infty}^{\infty}P_{m}(x;t)P_{n}(x;t)w(x;t)dx=h_{n}(t)\delta_{mn},\qquad m, n=0,1,2,\ldots,
\ee
where $P_{n}(x;t)$ has the following expansion
\be\label{expan}
P_{n}(x;t)=x^{n}+\mathrm{p}(n,t)x^{n-2}+\cdots,
\ee
since the weight $w(x;t)$ is even \cite[p. 21]{Chihara}.
Here $\mathrm{p}(n,t)$ denotes the coefficient of $x^{n-2}$, and the initial values of $\mathrm{p}(n,t)$ are set to be $\mathrm{p}(0,t)=0,\; \mathrm{p}(1,t)=0$.

The three-term recurrence relation for our orthogonal polynomials now reads \cite[p. 18-21]{Chihara}
\be\label{rr}
xP_{n}(x;t)=P_{n+1}(x;t)+\beta_{n}(t)P_{n-1}(x;t),
\ee
with the initial conditions
$$
P_{0}(x;t)=1,\qquad P_{-1}(x;t)=0.
$$
From (\ref{or}) and (\ref{rr}) we have
\be\label{be2}
\beta_{n}(t)=\frac{h_{n}(t)}{h_{n-1}(t)}.
\ee
Moreover, the combination of (\ref{expan}) and (\ref{rr}) gives
\be\label{be1}
\beta_{n}(t)=\mathrm{p}(n,t)-\mathrm{p}(n+1,t),
\ee
and then $\bt_0(t)=0$. Taking a telescopic sum of (\ref{be1}), we find
\be\label{sum}
\sum_{j=0}^{n-1}\beta_{j}(t)=-\mathrm{p}(n,t).
\ee

The Hankel determinant generated by the weight (\ref{weight}) is defined by
$$
D_{n}(t):=\det(\mu_{j+k}(t))_{j,k=0}^{n-1}=\begin{vmatrix}
\mu_{0}(t)&\mu_{1}(t)&\cdots&\mu_{n-1}(t)\\
\mu_{1}(t)&\mu_{2}(t)&\cdots&\mu_{n}(t)\\
\vdots&\vdots&&\vdots\\
\mu_{n-1}(t)&\mu_{n}(t)&\cdots&\mu_{2n-2}(t)
\end{vmatrix}
,
$$
where $\mu_{j}(t)$ is the $j$th moment given by
\bea\label{moment}
\mu_{j}(t)&=&\int_{-\infty}^{\infty}x^{j}w(x;t)dx\nonumber\\
&=&\left\{
\begin{aligned}
&0,&j=1,3,5,\ldots;\\
&t^{-\frac{j+1}{2}}\Gamma\bigg(\frac{j+1}{2}\bigg)U\bigg(\frac{j+1}{2},\frac{j+3}{2}+\la,\frac{1}{t}\bigg),&j=0,2,4,\ldots.
\end{aligned}
\right.
\eea
Here $U(a,b,z)$ is the Kummer function of the second kind \cite[p. 325]{NIST},
$$
U(a,b,z):=\frac{\Gamma(1-b)}{\Gamma(a-b+1)}{}_{1}F_{1}(a;b;z)+\frac{\Gamma(b-1)}{\Gamma(a)}z^{1-b}{}_{1}F_{1}(a-b+1;2-b;z),\qquad b\notin \mathbb{Z}
$$
and it has the integral representation \cite[p. 326]{NIST}
$$
U(a,b,z)=\frac{1}{\Gamma(a)}\int_{0}^{\infty}\mathrm{e}^{-zs}s^{a-1}(1+s)^{b-a-1}ds,\qquad \Re a>0,\;\Re z>0.
$$

It is well known that our orthogonal polynomials $P_n(x;t)$ have two alternative representations: the first as a determinant and the second as a multiple integral \cite[p. 27]{Szego} (see also \cite[p. 17-18]{Ismail}),
\bea
P_n(x;t)&=&\frac{1}{D_n(t)}\begin{vmatrix}
\mu_{0}(t)&\mu_{1}(t)&\cdots&\mu_{n}(t)\\
\mu_{1}(t)&\mu_{2}(t)&\cdots&\mu_{n+1}(t)\\
\vdots&\vdots&&\vdots\\
\mu_{n-1}(t)&\mu_{n}(t)&\cdots&\mu_{2n-1}(t)\\
1&x&\cdots&x^n
\end{vmatrix}\nonumber\\
&=&\frac{1}{n!\:D_n(t)}\int_{\mathbb{R}^{n}}\prod_{1\leq j<k\leq n}(x_j-x_k)^2\prod_{l=1}^n (x-x_l)w(x_l;t) dx_l.\nonumber
\eea
Furthermore, the Hankel determinant $D_n(t)$ can be expressed as a product \cite[(2.1.6)]{Ismail},
\be\label{hankel}
D_{n}(t)=\prod_{j=0}^{n-1}h_{j}(t).
\ee
It follows from (\ref{be2}) and (\ref{hankel}) that
\be\label{re}
\bt_n(t)=\frac{D_{n+1}(t)D_{n-1}(t)}{D_n^2(t)}.
\ee

One of the main task of this paper is to obtain the large $n$ asymptotic expansions of the recurrence coefficient $\bt_n(t)$, the sub-leading coefficient $\mathrm{p}(n,t)$ and the Hankel determinant $D_n(t)$. For the study of asymptotics of recurrence coefficients of semi-classical orthogonal polynomials and associated Hankel determinants, see \cite{Clarkson,Clarkson1,Kuij,Min2021,Xu2015,Zeng} for reference. See also \cite{BCI,CG,Its,Wu} on the asymptotics for weights with jump discontinuities and Fisher-Hartwig singularities.

This paper is organized as follows. In Section 2, we get two auxiliary quantities $R_n(t)$ and $r_n(t)$ from the ladder operators. Based on identities for $R_n(t)$, $r_n(t)$ and $\bt_n(t)$ obtained from the compatibility conditions ($S_{1}$), ($S_{2}$) and ($S_{2}'$), we derive the second-order difference equation satisfied by $\bt_n(t)$ and it can be transformed into the discrete Painlev\'{e} equation d-$\mathrm{P_{IV}}$. In addition, we obtain the second-order linear differential equation satisfied by the monic orthogonal polynomials.

In Section 3, we study the evolution of the auxiliary quantities in $t$. It is found that $R_n(t)$ and $r_n(t)$ satisfy the coupled Riccati equations, from which we obtain the second-order differential equations for $R_n(t)$ and $r_n(t)$, respectively. We prove that the equation for $R_n(t)$ can be transformed to a particular Painlev\'{e} V after some change of variables. We also obtain the second-order differential equation satisfied by $\bt_n(t)$. As a byproduct, the second-order difference equation for $\mathrm{p}(n,t)$ is derived.

In Section 4, we introduce a quantity $H_n(t)$, related to the logarithmic derivative of the Hankel determinant. We show that $H_n(t)$ can be expressed in terms of $\bt_n(t)$ and $\mathrm{p}(n,t)$, and $H_n(t)$ satisfies a second-order nonlinear differential equation. Based on the relations of $H_n(t)$, $\bt_n(t)$ and $\mathrm{p}(n,t)$, we find that $\mathrm{p}(n,t)$, with a change of variables, satisfies the Jimbo-Miwa-Okamoto $\s$-form of Painlev\'{e} V. In Section 5,
we give an alternative derivation of the Painlev\'{e} V equations for $R_n(t)$ and $\mathrm{p}(n,t)$ by establishing the relation between our problem and \cite{BC,Chen2012}.

In the last section, we study the large $n$ asymptotics of $\bt_n(t)$, $\mathrm{p}(n,t)$, $H_n(t)$ and $D_n(t)$ in the one-cut case. We start from using the Coulomb fluid approach to obtain the large $n$ expansion form of $\bt_n(t)$. By making use of the second-order difference equation satisfied by $\bt_n(t)$, we obtain the full asymptotic expansion for $\bt_n(t)$. The asymptotic expansions for $\mathrm{p}(n,t)$ and $H_n(t)$ are then derived from their relation with $\bt_n(t)$. By the Coulomb fluid approach, we also obtain the large $n$ expansion for the free energy, which provides the expansion form of $\ln D_n(t)$. From the relation of $D_n(t)$ and $\bt_n(t)$ in (\ref{re}), we finally obtain the asymptotic expansion for $\ln D_n(t)$, with two undetermined constants.

\section{Ladder Operators and Its Compatibility Conditions}
The ladder operator approach is a very useful and powerful tool to analyze the recurrence coefficients of orthogonal polynomials and the associated Hankel determinants. See, e.g., \cite{BC,BCE,Dai,Filipuk,Min2020}.
Following the general set-up by Chen and Ismail \cite{ChenIsmail2} (see also Ismail \cite[Chapter 3]{Ismail} and Van Assche \cite[Chapter 4]{VanAssche}), we have the lowering and raising ladder operators for our monic orthogonal polynomials:
\be\label{lowering}
\left(\frac{d}{dz}+B_{n}(z)\right)P_{n}(z)=\beta_{n}A_{n}(z)P_{n-1}(z),
\ee
\be\label{raising}
\left(\frac{d}{dz}-B_{n}(z)-\mathrm{v}'(z)\right)P_{n-1}(z)=-A_{n-1}(z)P_{n}(z),
\ee
where $\mathrm{v}(z):=-\ln w(z)$ is the potential and
\be\label{an}
A_{n}(z):=\frac{1}{h_{n}}\int_{-\infty}^{\infty}\frac{\mathrm{v}'(z)-\mathrm{v}'(y)}{z-y}P_{n}^{2}(y)w(y)dy,
\ee
\be\label{bn}
B_{n}(z):=\frac{1}{h_{n-1}}\int_{-\infty}^{\infty}\frac{\mathrm{v}'(z)-\mathrm{v}'(y)}{z-y}P_{n}(y)P_{n-1}(y)w(y)dy.
\ee
Note that we often suppress the $t$-dependence for brevity, and we have $w(-\infty)=w(+\infty)=0$ in our problem.

The functions $A_n(z)$ and $B_n(z)$ satisfy the following compatibility conditions:
\be
B_{n+1}(z)+B_{n}(z)=z A_{n}(z)-\mathrm{v}'(z), \tag{$S_{1}$}
\ee
\be
1+z(B_{n+1}(z)-B_{n}(z))=\beta_{n+1}A_{n+1}(z)-\beta_{n}A_{n-1}(z), \tag{$S_{2}$}
\ee
\be
B_{n}^{2}(z)+\mathrm{v}'(z)B_{n}(z)+\sum_{j=0}^{n-1}A_{j}(z)=\beta_{n}A_{n}(z)A_{n-1}(z). \tag{$S_{2}'$}
\ee
Here ($S_{2}'$) is obtained from the combination of ($S_{1}$) and ($S_{2}$). The compatibility conditions have the information needed to determine the recurrence coefficient $\bt_n$ and other auxiliary quantities.


For our weight (\ref{weight}), we have
\be\label{pt}
\mathrm{v}(z)=-\ln w(z)=z^2-\la\ln(1+tz^2).
\ee
It follows that
\be\label{vpz}
\mathrm{v}'(z)=2z-\frac{2\la t z}{1+tz^2}
\ee
and
\be\label{vp}
\frac{\mathrm{v}'(z)-\mathrm{v}'(y)}{z-y}=2-\frac{2\la t}{(1+ty^2)(1+tz^2)}+\frac{2\la t^2 yz}{(1+ty^2)(1+tz^2)}.
\ee

Substituting (\ref{vp}) into the definitions of $A_n(z)$ and $B_n(z)$ in (\ref{an}) and (\ref{bn}) and taking account of the parity of the integrands, we obtain
\be\label{anz}
A_{n}(z)=2-\frac{R_{n}(t)}{1+tz^2},
\ee
\be\label{bnz}
B_{n}(z)=\frac{tz\:r_{n}(t)}{1+tz^2},
\ee
where
\be\label{Rnt}
R_{n}(t):=\frac{2\la t}{h_{n}}\int_{-\infty}^{\infty}\frac{P_{n}^{2}(y)w(y)}{1+ty^2}dy,
\ee
\be\label{rnt}
r_{n}(t):=\frac{2\la t}{h_{n-1}}\int_{-\infty}^{\infty}\frac{yP_{n}(y)P_{n-1}(y)w(y)}{1+ty^2} dy.
\ee
\begin{proposition}
The auxiliary quantities $R_n(t),\;r_n(t)$ and the recurrence coefficient $\bt_n$ satisfy the following identities:
\be\label{re1}
R_n(t)=t(2\la-r_{n}(t)-r_{n+1}(t)),
\ee
\be\label{re2}
\bt_n=\frac{n+r_n(t)}{2},
\ee
\be\label{re3}
t\:r_n(t)(2\la-r_n(t))=\bt_n R_n(t)R_{n-1}(t),
\ee
\be\label{re4}
t\: r_n^2(t)-2(1+\la t)r_n(t)-\sum_{j=0}^{n-1}R_j(t)+2\bt_n(R_n(t)+R_{n-1}(t))=0.
\ee
\end{proposition}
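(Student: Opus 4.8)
The plan is to treat $(S_{1})$, $(S_{2})$ and $(S_{2}')$ as identities between rational functions of $z$, substitute the explicit expressions $A_n(z)=2-R_n/(1+tz^2)$, $B_n(z)=tz\,r_n/(1+tz^2)$ and $\mathrm{v}'(z)=2z-2\lambda t z/(1+tz^2)$ from (\ref{anz}), (\ref{bnz}), (\ref{vpz}), and then match the independent functional components on each side. The central technical device is that, after substitution, every term is a polynomial in $z$ plus a linear combination of $(1+tz^2)^{-1}$ and $(1+tz^2)^{-2}$; writing $tz^2=(1+tz^2)-1$ collapses all $z^2$-numerators into such combinations, so that comparing the coefficients of $1$, $(1+tz^2)^{-1}$ and $(1+tz^2)^{-2}$ (equivalently, the value at $z=\infty$ together with the residue data at the double pole $z^2=-1/t$) produces the scalar identities.

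First I would derive (\ref{re1}) from $(S_{1})$. Substituting, the left side becomes $tz(r_n+r_{n+1})/(1+tz^2)$, while the right side $zA_n(z)-\mathrm{v}'(z)$ simplifies—the two $2z$ terms cancel—to $z(2\lambda t-R_n)/(1+tz^2)$. Equating the single coefficient yields $R_n=t(2\lambda-r_n-r_{n+1})$ immediately.

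Next I would feed the explicit forms into $(S_{2}')$, which carries the bulk of the information. Expanding $B_n^2$, $\mathrm{v}'B_n$, the telescoped sum $\sum_{j=0}^{n-1}A_j=2n-(\sum_{j=0}^{n-1}R_j)/(1+tz^2)$, and the product $\beta_n A_n A_{n-1}$, and reducing everything to the basis $\{1,(1+tz^2)^{-1},(1+tz^2)^{-2}\}$, I obtain three scalar equations by matching coefficients. The $z\to\infty$ (polynomial) part gives $2n+2r_n=4\beta_n$, i.e. (\ref{re2}); the $(1+tz^2)^{-2}$ part gives $t\,r_n(2\lambda-r_n)=\beta_n R_n R_{n-1}$, i.e. (\ref{re3}); and the $(1+tz^2)^{-1}$ part, after using $-2r_n-2\lambda t\, r_n=-2(1+\lambda t)r_n$, gives exactly (\ref{re4}). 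Identity (\ref{re2}) can alternatively be obtained from $(S_{2})$: matching coefficients there yields the first-order difference relation $2(\beta_{n+1}-\beta_n)=1+r_{n+1}-r_n$, which telescopes to (\ref{re2}) once the initial data $\beta_0=0$ and $r_0=0$ are imposed (the latter holds because $P_{-1}\equiv0$ makes the integral in (\ref{rnt}) vanish).

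The computations are elementary, so the only real care needed is bookkeeping. The step most prone to error is the partial-fraction reduction inside $(S_{2}')$: the terms $B_n^2$ and $\mathrm{v}'B_n$ both carry a factor $z^2$ in the numerator, and one must consistently rewrite these via $tz^2=(1+tz^2)-1$ \emph{before} matching, otherwise the $(1+tz^2)^{-1}$ and $(1+tz^2)^{-2}$ coefficients get mixed and (\ref{re3}), (\ref{re4}) come out corrupted. A secondary subtlety, should one derive (\ref{re2}) from $(S_{2})$ rather than from the polynomial part of $(S_{2}')$, is correctly pinning down the initial values $\beta_0=r_0=0$ to fix the constant of summation.
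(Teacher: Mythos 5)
Your proposal is correct and follows exactly the paper's route: substitute the explicit forms (\ref{anz}), (\ref{bnz}) and (\ref{vpz}) into $(S_1)$ to obtain (\ref{re1}), and into $(S_2')$ to obtain (\ref{re2})--(\ref{re4}) by matching the coefficients of $1$, $(1+tz^2)^{-1}$ and $(1+tz^2)^{-2}$. The paper's proof is just a one-line statement of this substitution; your write-up supplies the same computation in full detail, and the bookkeeping (including the $tz^2=(1+tz^2)-1$ reduction) checks out.
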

\begin{proof}
Substituting (\ref{anz}) and (\ref{bnz}) into ($S_{1}$), we get (\ref{re1}). Similarly, substituting (\ref{anz}) and (\ref{bnz}) into ($S_{2}'$), we obtain (\ref{re2}), (\ref{re3}) and (\ref{re4}).
\end{proof}
\begin{theorem}\label{the}
The recurrence coefficient $\bt_n$ satisfies the second-order nonlinear difference equation
\be\label{beta}
(n-2\bt_n)(n+2\la-2\bt_n)+t\bt_n(2n+1+2\la-2\bt_n-2\bt_{n+1})(2n-1+2\la-2\bt_{n-1}-2\bt_{n})=0,
\ee
with the initial conditions
$$
\bt_0=0,\qquad \bt_1=\frac{U\big(\frac{3}{2},\frac{5}{2}+\la,\frac{1}{t}\big)}{2t\:U\big(\frac{1}{2},\frac{3}{2}+\la,\frac{1}{t}\big)}.
$$
\end{theorem}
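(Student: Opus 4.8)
The plan is to eliminate the auxiliary quantities $R_n(t)$ and $r_n(t)$ from the four identities in the Proposition, reducing everything to a closed relation among $\bt_{n-1}$, $\bt_n$ and $\bt_{n+1}$. The pivotal observation is that (\ref{re2}) is \emph{linear} in $r_n$, so it inverts immediately to $r_n(t) = 2\bt_n - n$. This single step is what makes the elimination tractable: it turns $r_n$ from an unknown into an explicit expression in the recurrence coefficient.

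Substituting this, together with $r_{n+1}(t) = 2\bt_{n+1} - (n+1)$, into (\ref{re1}) gives $R_n(t)$ purely in terms of the $\bt$'s, namely
\[
R_n(t) = t\,(2n+1+2\la - 2\bt_n - 2\bt_{n+1}),
\]
and shifting $n \mapsto n-1$ yields $R_{n-1}(t) = t\,(2n-1+2\la - 2\bt_{n-1} - 2\bt_n)$. The two factors appearing in the second product of the target equation (\ref{beta}) have thus already materialized.

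The last step is to feed these into (\ref{re3}). Its left-hand side is $t\,r_n(2\la - r_n) = -t\,(n-2\bt_n)(n+2\la-2\bt_n)$, while its right-hand side, $\bt_n R_n(t) R_{n-1}(t)$, equals $t^2\bt_n$ times the product of the two linear factors found above. Dividing through by $t$, which is legitimate since $t>0$, and transposing produces exactly (\ref{beta}). I expect this to be purely mechanical; it is worth noting that identity (\ref{re4}) is \emph{not} needed for the difference equation, so no summation has to be resolved, and there is no genuine obstacle beyond careful bookkeeping of the index shifts and signs.

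For the initial conditions, $\bt_0 = 0$ was already recorded just after (\ref{be1}). For $\bt_1$, the evenness of the weight forces $P_1(x;t)=x$, so $h_1 = \mu_2$ and $\bt_1 = h_1/h_0 = \mu_2(t)/\mu_0(t)$; inserting the explicit moments from (\ref{moment}) and simplifying with $\Gamma(3/2)/\Gamma(1/2) = 1/2$ gives the stated closed form in terms of Kummer functions.
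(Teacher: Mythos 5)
Your proposal is correct and follows essentially the same route as the paper: invert (\ref{re2}) to get $r_n=2\bt_n-n$, substitute into (\ref{re1}) to obtain $R_n(t)$ and its index shift, then eliminate everything from (\ref{re3}); the initial conditions are handled identically via $\bt_1=\mu_2(t)/\mu_0(t)$. Your added remarks — that (\ref{re4}) is not needed and that the division by $t$ is legitimate since $t>0$ — are accurate and consistent with the paper's argument.
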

\begin{proof}
From (\ref{re2}) we have
\be\label{rne}
r_n(t)=2\bt_n-n.
\ee
Substituting it into (\ref{re1}) gives
\be\label{Rne}
R_n(t)=t(2n+1+2\la-2\bt_n-2\bt_{n+1}).
\ee
Using (\ref{rne}) and (\ref{Rne}) to eliminate $r_n(t), R_n(t)$ and $R_{n-1}(t)$ in (\ref{re3}), we obtain (\ref{beta}). The initial condition $\bt_0=0$ is shown in the introduction and $\bt_1$ is evaluated by $\bt_1=h_1/h_0=\mu_2(t)/\mu_0(t)$ with the aid of (\ref{moment}).
\end{proof}
\begin{remark}
Letting $x_n=2\bt_n-n-\la$, then equation (\ref{beta}) becomes
$$
(x_n+x_{n-1})(x_n+x_{n+1})=-\frac{2(x_n^2-\la^2)}{t(x_n+n+\la)},
$$
which can be identified with the discrete Painlev\'{e} equation d-$P_{IV}$ \cite[p. 269]{GR} (see also \cite[(1.26)]{VanAssche})
$$
(x_n+x_{n-1})(x_n+x_{n+1})=\frac{(x_n^2-a^2)(x_n^2-b^2)}{(x_n+z_n)^2-c^2},
$$
where $z_n=n+\frac{\la}{2}+\frac{b^2 t}{4},\;a=\la, \;c=\frac{b^2 t}{4}-\frac{\la}{2}$ and $b\rightarrow\infty$. In the geometric aspects of Painlev\'{e} equations, this d-$P_{IV}$ corresponds to d-$P(E_{6}^{(1)}/A_{2}^{(1)})$ with symmetry $E_{6}^{(1)}$ and surface $A_{2}^{(1)}$ in the Okamoto-Sakai classification of Kajiwara, Noumi and Yamada \cite[Section 8.1.15]{KNY} after identifying $f=x_{2n-1}$ and $g=x_{2n}$.
\end{remark}

In the end of this section, we show that our monic orthogonal polynomials satisfy a second-order differential equation, with the coefficients all expressed in terms of $\bt_n$ and $\bt_{n+1}$.
\begin{theorem}
The monic orthogonal polynomials $P_n(z),\; n=0,1,2,\ldots,$ satisfy the following second-order differential equation:
\be\label{ode}
P_n''(z)-\left(\mathrm{v}'(z)+\frac{A_{n}'(z)}{A_{n}(z)}\right)P_n'(z)+\left(B_{n}'(z)-B_{n}(z)\frac{A_{n}'(z)}{A_{n}(z)}
+\sum_{j=0}^{n-1}A_{j}(z)\right)P_n(z)=0,
\ee
where $\mathrm{v}'(z)$ is given by (\ref{vpz}) and
\be\label{anz1}
A_n(z)=2-\frac{t(2n+1+2\la-2\bt_n-2\bt_{n+1})}{1+tz^2},
\ee
\be\label{bnz1}
B_n(z)=\frac{tz(2\bt_n-n)}{1+tz^2},
\ee
\bea\label{sum1}
\sum_{j=0}^{n-1}A_{j}(z)&=&2n+\frac{2(n-2\bt_n)(n+2\la-2\bt_n)}{(2n+1+2\la-2\bt_n-2\bt_{n+1})(1+tz^2)}\nonumber\\
&-&\frac{n^2t+2n(1+\la t)+2(t-2)\bt_n-4t\bt_n\bt_{n+1}}{1+tz^2}.
\eea
\end{theorem}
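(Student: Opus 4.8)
The plan is to derive the differential equation (\ref{ode}) directly from the ladder operators (\ref{lowering}) and (\ref{raising}), and then to express each of its coefficient functions explicitly in terms of $\bt_n$ and $\bt_{n+1}$ using the identities already established in Proposition 2.1 and Theorem 2.2.

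First I would recall the lowering operator relation (\ref{lowering}),
$$
P_n'(z)+B_n(z)P_n(z)=\beta_n A_n(z)P_{n-1}(z),
$$
and use it to solve for $\beta_n P_{n-1}(z)=\big(P_n'(z)+B_n(z)P_n(z)\big)/A_n(z)$. The idea is to substitute this into the raising operator (\ref{raising}),
$$
P_{n-1}'(z)-\big(B_n(z)+\mathrm{v}'(z)\big)P_{n-1}(z)=-A_{n-1}(z)P_n(z),
$$
so as to eliminate $P_{n-1}(z)$ altogether. Concretely, I would differentiate the expression for $\beta_n P_{n-1}(z)$ to obtain $\beta_n P_{n-1}'(z)$, then plug both into $\beta_n$ times the raising relation. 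After clearing the common denominator $A_n(z)$ and collecting terms in $P_n''(z)$, $P_n'(z)$ and $P_n(z)$, the second-order differential equation should emerge. At this stage the coefficient of $P_n(z)$ will naturally involve the combination $\beta_n A_n(z)A_{n-1}(z)$, which by the compatibility condition ($S_2'$) equals $B_n^2(z)+\mathrm{v}'(z)B_n(z)+\sum_{j=0}^{n-1}A_j(z)$; using this substitution is what converts the $A_{n-1}$-dependence into the clean form displayed in (\ref{ode}), leaving the sum $\sum_{j=0}^{n-1}A_j(z)$ as the only remaining non-local term.

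Once the structural equation (\ref{ode}) is in hand, the remaining task is purely a matter of explicit evaluation. I would substitute (\ref{rne}) and (\ref{Rne}) from the proof of Theorem 2.2 into the expressions (\ref{anz}) and (\ref{bnz}) for $A_n(z)$ and $B_n(z)$ to obtain (\ref{anz1}) and (\ref{bnz1}) immediately. For the sum (\ref{sum1}), I would start from $\sum_{j=0}^{n-1}A_j(z)=2n-\frac{1}{1+tz^2}\sum_{j=0}^{n-1}R_j(t)$, which follows term-by-term from (\ref{anz}). The telescoping identity (\ref{re4}) gives a closed form for $\sum_{j=0}^{n-1}R_j(t)$ in terms of $r_n$, $R_n$ and $R_{n-1}$; after substituting (\ref{rne}) for $r_n$ and (\ref{Rne}) for $R_n$ and $R_{n-1}$, and simplifying, one should recover (\ref{sum1}). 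The appearance of the factor $(n-2\beta_n)(n+2\la-2\beta_n)$ in the numerator strongly suggests that (\ref{re3}) (equivalently the difference equation (\ref{beta})) will be used to tidy the final expression.

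The main obstacle I anticipate is the algebraic bookkeeping in the final simplification of $\sum_{j=0}^{n-1}A_j(z)$: equation (\ref{re4}) mixes $r_n^2$, a linear term in $r_n$, and the combination $\beta_n(R_n+R_{n-1})$, so after the substitutions one must carefully expand several products and reconcile them against the two-term structure of (\ref{sum1}), where one piece carries the factor $(2n+1+2\la-2\beta_n-2\beta_{n+1})$ in the denominator and the other does not. Keeping track of which terms fold into the $(1+tz^2)^{-1}$ coefficient and verifying that the constant (in $z$) part is exactly $2n$ will require patience, and it is the step most prone to sign or factor-of-two slips. The derivation of (\ref{ode}) itself, by contrast, is standard and follows the Chen–Ismail scheme cited in the paper.
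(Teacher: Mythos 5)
Your proposal is correct and follows essentially the same route as the paper: eliminate $P_{n-1}$ between the two ladder relations, invoke ($S_2'$) to replace $\beta_nA_nA_{n-1}$, and then evaluate the coefficients via (\ref{rne}), (\ref{Rne}), (\ref{re4}) and (\ref{re3}). The only cosmetic difference is that the paper applies (\ref{re3}) to rewrite $2\bt_nR_{n-1}$ as $2t\,r_n(2\la-r_n)/R_n$ \emph{before} substituting, thereby never introducing $\bt_{n-1}$, whereas you substitute first and clean up afterwards — an equivalent computation that you correctly anticipate.
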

\begin{proof}
Eliminating $P_{n-1}(z)$ from the ladder operator equations (\ref{lowering}) and (\ref{raising}), we obtain (\ref{ode}) with the aid of ($S_{2}'$). Using (\ref{anz}) and (\ref{Rne}), we arrive at (\ref{anz1}). The expression of $B_n(z)$ in (\ref{bnz1}) comes from (\ref{bnz}) and (\ref{rne}).

From (\ref{anz}) we have
\bea
\sum_{j=0}^{n-1}A_{j}(z)&=&2n-\frac{\sum_{j=0}^{n-1}R_{j}(t)}{1+tz^2}\nonumber\\
&=&2n-\frac{t\: r_n^2(t)-2(1+\la t)r_n(t)+2\bt_nR_n(t)+\frac{2t\:r_n(t)(2\la-r_n(t))}{R_n(t)}}{1+tz^2},\nonumber
\eea
where use has been made of (\ref{re4}) and (\ref{re3}). Inserting (\ref{rne}) and (\ref{Rne}) into the above, we obtain (\ref{sum1}). This completes the proof.
\end{proof}
\begin{remark}
When $\la=0$, equation (\ref{ode}) is reduced to Hermite's differential equation \cite[p. 106]{Szego}
$$
P_n''(z)-2zP_n'(z)+2nP_n(z)=0.
$$
\end{remark}

\section{$t$ Evolution and Painlev\'{e} V}
Since all the quantities considered in this paper, such as the recurrence coefficient $\bt_n(t)$, the sub-leading coefficient $\mathrm{p}(n,t)$ and the auxiliary quantities $R_n(t)$ and $r_n(t)$, depend on $t$, we would like to study the evolution of these quantities in $t$ in this section. We start from taking derivatives with respect to $t$ in the orthogonality conditions to obtain the following proposition.

\begin{proposition}
We have
\be\label{p1}
2t^2\frac{d}{dt}\ln h_n(t)=2\la t-R_n(t),
\ee
\be\label{p2}
2t^2\frac{d}{dt}\mathrm{p}(n,t)=r_n(t)-\bt_n R_n(t).
\ee
\end{proposition}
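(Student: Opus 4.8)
The plan is to differentiate the orthogonality relations (\ref{or}) with respect to $t$ and to exploit the fact that $\partial_t P_n(x;t)$ is a polynomial of degree at most $n-2$ (the leading coefficient of the monic $P_n$ is $t$-independent and the $x^{n-1}$ term is absent by symmetry), so it is orthogonal to $P_n$, while $P_n$ is orthogonal to every polynomial of degree less than $n$. Throughout I will use
$$
\frac{\partial}{\partial t}w(x;t)=\frac{\la x^2}{1+tx^2}\:w(x;t),
$$
together with the partial-fraction identity $\frac{x^2}{1+tx^2}=\frac{1}{t}-\frac{1}{t(1+tx^2)}$, which converts integrals against $\partial_t w$ into the combination of an ordinary orthogonality integral and one of the weighted integrals defining $R_n(t)$ and $r_n(t)$ in (\ref{Rnt}) and (\ref{rnt}).

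For (\ref{p1}) I would start from $h_n(t)=\int_{-\infty}^{\infty}P_n^2(x;t)w(x;t)\,dx$ and differentiate. The cross term $2\int P_n(\partial_t P_n)w\,dx$ vanishes by orthogonality, leaving $\frac{d}{dt}h_n=\la\int\frac{x^2}{1+tx^2}P_n^2 w\,dx$. Splitting via the partial-fraction identity, the first piece returns $\la h_n/t$ and the second equals $-\frac{h_nR_n}{2t^2}$ by the definition (\ref{Rnt}); dividing by $h_n$ and multiplying by $2t^2$ yields (\ref{p1}).

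For (\ref{p2}) I would differentiate the orthogonality relation $\int P_n P_{n-2}w\,dx=0$ (valid for $n\geq 2$). The term $\int P_n(\partial_t P_{n-2})w\,dx$ vanishes since $\deg\partial_t P_{n-2}<n$. Writing $\partial_t P_n=\partial_t\mathrm{p}(n,t)\,P_{n-2}+(\text{lower order})$ identifies $\int(\partial_t P_n)P_{n-2}w\,dx=h_{n-2}\,\partial_t\mathrm{p}(n,t)$, and the remaining term produces, after the partial-fraction split kills the orthogonal piece, the integral $I:=\int\frac{P_nP_{n-2}w}{1+tx^2}\,dx$, so that $\partial_t\mathrm{p}(n,t)=\frac{\la}{t h_{n-2}}I$.

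The decisive step is to evaluate $I$ so that index-$n$ quantities appear. Rather than expanding $P_n$, I would use the three-term recurrence (\ref{rr}) in the form $P_{n-2}=\beta_{n-1}^{-1}(xP_{n-1}-P_n)$, which turns $I$ into $\frac{1}{\beta_{n-1}}\big[\int\frac{xP_nP_{n-1}w}{1+tx^2}-\int\frac{P_n^2w}{1+tx^2}\big]$. By the definitions (\ref{rnt}) and (\ref{Rnt}) these integrals are $\frac{h_{n-1}r_n}{2\la t}$ and $\frac{h_nR_n}{2\la t}$ respectively; using $h_n=\beta_n h_{n-1}$ and $h_{n-1}=\beta_{n-1}h_{n-2}$ then collapses the prefactors and gives $\partial_t\mathrm{p}(n,t)=\frac{1}{2t^2}(r_n-\beta_nR_n)$, which is (\ref{p2}). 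I expect the main subtlety to be precisely this index bookkeeping: one must eliminate $P_{n-2}$ rather than $P_n$ through the recurrence so that the resulting weighted integrals match the index-$n$ definitions of $r_n$ and $R_n$; expanding the other factor instead leads to a formula in $r_{n-1}$, $\beta_{n-1}$, $R_{n-2}$ that agrees only after an additional identity. The cases $n=0,1$ can be checked separately against $\mathrm{p}(0,t)=\mathrm{p}(1,t)=0$.
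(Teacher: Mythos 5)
Your proposal is correct and follows essentially the same route as the paper: differentiate the orthogonality relations, use the partial-fraction identity $\frac{x^2}{1+tx^2}=\frac{1}{t}-\frac{1}{t(1+tx^2)}$ (the paper's $1+tx^2-1$ manipulation), and for (\ref{p2}) substitute $P_{n-2}=\beta_{n-1}^{-1}(xP_{n-1}-P_n)$ so that the integrals match the index-$n$ definitions of $r_n$ and $R_n$. The ``decisive step'' you flag is exactly the step the paper takes, and your index bookkeeping via $h_n=\beta_nh_{n-1}$, $h_{n-1}=\beta_{n-1}h_{n-2}$ reproduces the paper's computation.
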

\begin{proof}
Since
$$
h_n(t)=\int_{-\infty}^{\infty}P_n^2(x;t)\mathrm{e}^{-x^2}(1+tx^2)^\la dx,
$$
we find
$$
h_n'(t)=\la\int_{-\infty}^{\infty}x^2P_n^2(x;t)\mathrm{e}^{-x^2}(1+tx^2)^{\la-1} dx.
$$
It follows that
\bea
th_n'(t)&=&\la\int_{-\infty}^{\infty}(1+tx^2-1)P_n^2(x;t)\mathrm{e}^{-x^2}(1+tx^2)^{\la-1} dx\nonumber\\
&=&\la h_n-\la\int_{-\infty}^{\infty}\frac{P_n^2(x;t)w(x;t)}{1+tx^2} dx.\nonumber
\eea
In view of (\ref{Rnt}), we arrive at (\ref{p1}).

Next, we get from (\ref{or}) with $m$ replaced by $n-2$ that
$$
0=\int_{-\infty}^{\infty}P_n(x;t)P_{n-2}(x;t)\mathrm{e}^{-x^2}(1+tx^2)^\la dx.
$$
Taking a derivative with respect to $t$ gives
\be\label{dp}
\frac{d}{dt}\mathrm{p}(n,t)=-\frac{\la}{h_{n-2}}\int_{-\infty}^{\infty}x^2 P_n(x;t)P_{n-2}(x;t)\mathrm{e}^{-x^2}(1+tx^2)^{\la-1} dx.
\ee
Then we have
\bea
t\frac{d}{dt}\mathrm{p}(n,t)&=&-\frac{\la}{h_{n-2}}\int_{-\infty}^{\infty}(1+tx^2-1) P_n(x;t)P_{n-2}(x;t)\mathrm{e}^{-x^2}(1+tx^2)^{\la-1} dx\nonumber\\
&=&\frac{\la}{h_{n-2}}\int_{-\infty}^{\infty}P_n(x;t)P_{n-2}(x;t)\mathrm{e}^{-x^2}(1+tx^2)^{\la-1} dx.\nonumber
\eea
With the aid of the recurrence relation (\ref{rr}) and (\ref{be2}), we find that
$$
t\frac{d}{dt}\mathrm{p}(n,t)=\frac{\la}{h_{n-1}}\int_{-\infty}^{\infty}\frac{xP_n(x;t)P_{n-1}(x;t)w(x;t)}{1+tx^2} dx
-\frac{\la}{h_{n-1}}\int_{-\infty}^{\infty}\frac{P_n^2(x;t)w(x;t)}{1+tx^2} dx.
$$
By using (\ref{Rnt}) and (\ref{rnt}), we obtain (\ref{p2}).
\end{proof}

\begin{corollary}
The following identity holds:
\be\label{iden}
\bt_nR_n(t)=r_n(t)+2t\: \mathrm{p}(n,t)+2t\bt_n\bt_{n-1}.
\ee
\end{corollary}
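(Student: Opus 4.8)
The plan is to read off from (\ref{p2}) that the asserted identity is equivalent to a single first-order differential identity for the sub-leading coefficient $\mathrm{p}(n,t)$, and then to establish that differential identity from (\ref{p1}) together with the telescoping structure of the $\bt_j$. Rewriting (\ref{p2}) as $\bt_n R_n(t)=r_n(t)-2t^2\frac{d}{dt}\mathrm{p}(n,t)$, one sees that (\ref{iden}) holds if and only if
$$
-2t^2\frac{d}{dt}\mathrm{p}(n,t)=2t\:\mathrm{p}(n,t)+2t\bt_n\bt_{n-1},
$$
that is, if and only if $\frac{d}{dt}\big(t\:\mathrm{p}(n,t)\big)=-\bt_n\bt_{n-1}$. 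So the entire problem reduces to proving this one relation.

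First I would extract the $t$-evolution of $\bt_n$ from (\ref{p1}). Since $\bt_n=h_n(t)/h_{n-1}(t)$ by (\ref{be2}), subtracting (\ref{p1}) at level $n-1$ from (\ref{p1}) at level $n$ gives
$$
2t^2\frac{d}{dt}\ln\bt_n=R_{n-1}(t)-R_n(t).
$$
Invoking (\ref{re1}) and (\ref{re2}) (equivalently the expression $R_n(t)=t(2n+1+2\la-2\bt_n-2\bt_{n+1})$ recorded in the proof of Theorem \ref{the}), a short computation collapses the right-hand side to $R_{n-1}(t)-R_n(t)=2t(\bt_{n+1}-\bt_{n-1}-1)$, whence
$$
\frac{d}{dt}\big(t\bt_n\big)=t\bt_n'+\bt_n=\bt_n(\bt_{n+1}-\bt_{n-1})=\bt_n\bt_{n+1}-\bt_{n-1}\bt_n.
$$

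I would then sum this relation over $j=0,1,\dots,n-1$. By linearity and the telescoping sum rule (\ref{sum}), the left-hand side becomes $\frac{d}{dt}\big(t\sum_{j=0}^{n-1}\bt_j\big)=-\frac{d}{dt}\big(t\:\mathrm{p}(n,t)\big)$, while on the right the summand $\bt_j\bt_{j+1}-\bt_{j-1}\bt_j$ telescopes to $\bt_n\bt_{n-1}-\bt_0\bt_{-1}=\bt_n\bt_{n-1}$, the boundary term vanishing because $\bt_0=0$. This yields exactly $\frac{d}{dt}\big(t\:\mathrm{p}(n,t)\big)=-\bt_n\bt_{n-1}$, and substituting back into (\ref{p2}) produces (\ref{iden}).

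The main obstacle is the middle step, namely turning (\ref{p1}) into the clean evolution equation $\frac{d}{dt}(t\bt_n)=\bt_n(\bt_{n+1}-\bt_{n-1})$; this hinges on simplifying $R_{n-1}(t)-R_n(t)$ through the already-established algebraic relations among $R_n(t)$, $r_n(t)$ and $\bt_n$, after which the product form $\bt_j\bt_{j+1}-\bt_{j-1}\bt_j$ makes the telescoping transparent. As an independent check one can avoid differentiation altogether: expanding $\sum_{j=0}^{n-1}R_j(t)$ in (\ref{re4}) by means of (\ref{re1})--(\ref{re2}) and the sum rules (\ref{sum})--(\ref{be1}) reintroduces $\mathrm{p}(n,t)$ into (\ref{re4}), and a purely algebraic rearrangement recovers (\ref{iden}).
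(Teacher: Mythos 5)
Your proof is correct, but it reaches the key intermediate identity by a genuinely different route from the paper. Both arguments reduce the corollary, via (\ref{p2}), to the single relation $\frac{d}{dt}\big(t\,\mathrm{p}(n,t)\big)=-\bt_n\bt_{n-1}$ (the paper's (\ref{dp1})). The paper proves this directly from the integrals: it integrates by parts in the $t$-differentiated orthogonality relation $\frac{d}{dt}\int P_nP_{n-2}w\,dx=0$ and uses the degree expansion $xP_n'(x)=nP_n(x)-2\mathrm{p}(n,t)P_{n-2}(x)+\cdots$. You instead derive the Toda-type evolution $2t^2\frac{d}{dt}\ln\bt_n=R_{n-1}(t)-R_n(t)$ from (\ref{p1}) and (\ref{be2}), use (\ref{Rne}) to collapse the right-hand side to $2t(\bt_{n+1}-\bt_{n-1}-1)$ (which I have checked), and then telescope over $j$ using (\ref{sum}) and $\bt_0=0$; the only delicate point is the $j=0$ term, where $\ln\bt_0$ is undefined but the relation $\frac{d}{dt}(t\bt_0)=\bt_0\bt_1-\bt_{-1}\bt_0$ holds trivially, as you note. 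Your route has the virtue of using only identities already established from the compatibility conditions, at the cost of invoking the $\bt_{n+1}$-dependence of $R_n$ that the paper's integral argument avoids. Your closing ``independent check'' in fact also goes through: substituting $\sum_{j=0}^{n-1}R_j=t\big(2n\la+2\mathrm{p}(n,t)+2\mathrm{p}(n+1,t)+n^2\big)$ (from (\ref{re1}), (\ref{re2}), (\ref{sum})) together with (\ref{rne}), (\ref{Rne}) and (\ref{be1}) into (\ref{re4}) cancels all quadratic terms and leaves exactly (\ref{iden}); this purely algebraic derivation requires no $t$-differentiation at all and is arguably the cleanest of the three.
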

\begin{proof}
Through integration by parts, we find from (\ref{dp}) that
$$
2t\frac{d}{dt}\mathrm{p}(n,t)=\frac{1}{h_{n-2}}\int_{-\infty}^{\infty}\big(xP_n'(x)P_{n-2}(x)-2x^2P_n(x)P_{n-2}(x)\big)w(x)dx.
$$
Using the fact that
$$
xP_n'(x)=nP_n(x)-2\mathrm{p}(n,t)P_{n-2}(x)+\mathrm{lower}\:\: \mathrm{degree}\:\:\mathrm{polynomials},
$$
we have
\be\label{dp1}
2t\frac{d}{dt}\mathrm{p}(n,t)=-2\mathrm{p}(n,t)-2\bt_n\bt_{n-1}.
\ee
The combination of (\ref{p2}) and (\ref{dp1}) gives the result in (\ref{iden}).
\end{proof}
\begin{theorem}
The sub-leading coefficient of the monic orthogonal polynomials, $\mathrm{p}(n):=\mathrm{p}(n,t)$, satisfies the second-order nonlinear difference equation
\bea\label{pnd}
&&\big(n-2\mathrm{p}(n)+2\mathrm{p}(n+1)\big)\big(n+2\la-2\mathrm{p}(n)+2\mathrm{p}(n+1)\big)-\big(2n-1+2\la-2\mathrm{p}(n-1)+2\mathrm{p}(n+1)\big)\nonumber\\
&\times&\left[n-2(t+1)\mathrm{p}(n)+2\mathrm{p}(n+1)-2t\big(\mathrm{p}(n-1)-\mathrm{p}(n)\big)\big(\mathrm{p}(n)-\mathrm{p}(n+1)\big)\right]=0,
\eea
with the initial conditions
$$
\mathrm{p}(1,t)=0,\qquad \mathrm{p}(2,t)=-\frac{U\big(\frac{3}{2},\frac{5}{2}+\la,\frac{1}{t}\big)}{2t\:U\big(\frac{1}{2},\frac{3}{2}+\la,\frac{1}{t}\big)}.
$$
\end{theorem}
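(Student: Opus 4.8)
The plan is to reduce the sought equation (\ref{pnd}) to the already-established difference equation (\ref{beta}) for $\bt_n$, by systematically trading $\bt$-quantities for $\mathrm{p}$-quantities through the telescoping relation (\ref{be1}). First I would record the two substitutions
$$
\bt_{n-1}=\mathrm{p}(n-1)-\mathrm{p}(n),\qquad \bt_n=\mathrm{p}(n)-\mathrm{p}(n+1),
$$
which are immediate from (\ref{be1}). Inserting these, the first two factors of (\ref{pnd}) are recognised on the nose as $n-2\bt_n$ and $n+2\la-2\bt_n$, while the factor $2n-1+2\la-2\mathrm{p}(n-1)+2\mathrm{p}(n+1)$ collapses to $2n-1+2\la-2\bt_{n-1}-2\bt_n$, since $-2\mathrm{p}(n-1)+2\mathrm{p}(n+1)=-2(\bt_{n-1}+\bt_n)$. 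Thus three of the four building blocks of (\ref{pnd}) already coincide with the corresponding factors of (\ref{beta}).

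The only genuine work lies in the remaining bracket $n-2(t+1)\mathrm{p}(n)+2\mathrm{p}(n+1)-2t(\mathrm{p}(n-1)-\mathrm{p}(n))(\mathrm{p}(n)-\mathrm{p}(n+1))$. A naive substitution is unsatisfactory here: rewriting the factor $t\bt_n(2n+1+2\la-2\bt_n-2\bt_{n+1})$ of (\ref{beta}) directly in terms of $\mathrm{p}$ via (\ref{be1}) would introduce $\bt_{n+1}=\mathrm{p}(n+1)-\mathrm{p}(n+2)$, and hence $\mathrm{p}(n+2)$, producing a third-order rather than a second-order recurrence. The key observation is that this dangerous combination is exactly $\bt_n R_n$, because by (\ref{Rne}) one has $R_n=t(2n+1+2\la-2\bt_n-2\bt_{n+1})$. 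The corollary identity (\ref{iden}) then supplies
$$
\bt_n R_n=r_n+2t\,\mathrm{p}(n)+2t\bt_n\bt_{n-1},
$$
and feeding in $r_n=2\bt_n-n$ from (\ref{rne}) removes all explicit $\bt_{n+1}$, equivalently all $\mathrm{p}(n+2)$, dependence. After replacing $\bt_n$ and $\bt_{n-1}$ by their $\mathrm{p}$-expressions, a short computation shows that the bracket displayed above equals $-\bt_n R_n$. This is the step I expect to be the main obstacle, since it is precisely where the recurrence order is lowered from three to two; everything before and after is bookkeeping.

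With all four factors matched, and noting that $(2n-1+2\la-2\bt_{n-1}-2\bt_n)(-\bt_n R_n)$ reproduces $-$ times the mixed product in (\ref{pnd}), substituting into (\ref{beta}) (valid by Theorem \ref{the}) yields (\ref{pnd}) verbatim. Finally, the initial data are read off from (\ref{be1}): the value $\mathrm{p}(1,t)=0$ is the normalisation fixed in the introduction, while setting $n=1$ in (\ref{be1}) gives $\mathrm{p}(2,t)=\mathrm{p}(1,t)-\bt_1=-\bt_1$, so inserting the closed form of $\bt_1$ from Theorem \ref{the} produces the stated expression for $\mathrm{p}(2,t)$.
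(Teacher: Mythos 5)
Your proposal is correct and is essentially the paper's own argument in a slightly reorganized form: the paper multiplies (\ref{iden}) by $R_{n-1}(t)$ and invokes (\ref{re3}) before eliminating $r_n$ and $R_{n-1}$ via (\ref{rne}) and (\ref{Rne}), whereas you substitute (\ref{iden}) directly into (\ref{beta}) (which was itself obtained from (\ref{re3}) by those same eliminations), so the two computations coincide. Your identification of the bracket with $-\bt_n R_n(t)$ via (\ref{iden}) is exactly the step the paper uses to keep the recurrence at second order, and the treatment of the initial conditions matches as well.
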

\begin{proof}
Multiplying by $R_{n-1}(t)$ on both sides of (\ref{iden}) and comparing with (\ref{re3}), we have
$$
t\:r_n(t)(2\la-r_n(t))=\big(r_n(t)+2t\: \mathrm{p}(n,t)+2t\bt_n\bt_{n-1}\big)R_{n-1}(t).
$$
Eliminating $r_n(t)$ and $R_{n-1}(t)$ by using (\ref{rne}) and (\ref{Rne}), we get an equation satisfied by $\bt_n,\;\bt_{n-1}$ and $\mathrm{p}(n,t)$. Taking account of the relation (\ref{be1}), we obtain the difference equation (\ref{pnd}). The initial condition $\mathrm{p}(1,t)=0$ is known and $\mathrm{p}(2,t)$ is evaluated from the fact that $\mathrm{p}(2,t)=\mathrm{p}(1,t)-\bt_1=-\bt_1$.
\end{proof}
\begin{remark}
Substituting (\ref{be1}) into (\ref{beta}), one would only obtain a third-order difference equation satisfied by $\mathrm{p}(n,t)$.
\end{remark}
\begin{proposition}
The auxiliary quantities $R_n(t)$ and $r_n(t)$ satisfy the coupled Riccati equations:
\be\label{ri1}
t^2 r_n'(t)=\frac{t\: r_n(t)(2\la-r_n(t))}{R_n(t)}-\frac{n+r_n(t)}{2}R_n(t),
\ee
\be\label{ri2}
2 t^2 R_n'(t)=R_n^2(t)+(2 t\: r_n(t)-2 \la  t+t-2)R_n(t)-4 t\: r_n(t)+4 \la  t.
\ee
\end{proposition}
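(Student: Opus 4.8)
The plan is to derive both equations algebraically, by differentiating the relations of Proposition~2.4 with respect to $t$ and inserting the already-computed derivative of $\ln h_n$ from (\ref{p1}); no fresh integral manipulations are required, so everything reduces to bookkeeping among the quantities $R_n,\,r_n,\,\bt_n$ and their neighbouring-index shifts.

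First I would set up a Toda-type flow for $\bt_n$. Since $\bt_n=h_n/h_{n-1}$ by (\ref{be2}), taking the logarithmic $t$-derivative and inserting (\ref{p1}) at the indices $n$ and $n-1$ gives $2t^2\bt_n'(t)/\bt_n=(2\la t-R_n(t))-(2\la t-R_{n-1}(t))=R_{n-1}(t)-R_n(t)$, hence $2t^2\bt_n'(t)=\bt_n\big(R_{n-1}(t)-R_n(t)\big)$. By (\ref{re2}) we have $r_n(t)=2\bt_n-n$ and therefore $r_n'(t)=2\bt_n'(t)$, so this flow reads $t^2 r_n'(t)=\bt_n\big(R_{n-1}(t)-R_n(t)\big)$. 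To recover (\ref{ri1}) it then suffices to rewrite the right-hand side through $R_n,r_n$ alone: (\ref{re3}) gives $\bt_n R_{n-1}(t)=t\,r_n(t)(2\la-r_n(t))/R_n(t)$, while (\ref{re2}) gives $\bt_n R_n(t)=\tfrac{n+r_n(t)}{2}R_n(t)$, and subtracting these yields exactly (\ref{ri1}).

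For (\ref{ri2}) I would start from (\ref{re1}), namely $R_n=t(2\la-r_n-r_{n+1})$. Differentiating produces $t^2 R_n'(t)=tR_n-t\big(t^2 r_n'(t)+t^2 r_{n+1}'(t)\big)$, and into this I substitute the flow just derived together with its shift at $n+1$, i.e. $t^2 r_n'(t)=\bt_n(R_{n-1}-R_n)$ and $t^2 r_{n+1}'(t)=\bt_{n+1}(R_n-R_{n+1})$. The result is an expression involving the neighbouring quantities $R_{n-1},R_{n+1},\bt_n,\bt_{n+1},r_{n+1}$, all of which must be expelled. I would use (\ref{re3}) at indices $n$ and $n+1$ to replace $\bt_n R_{n-1}$ and $\bt_{n+1}R_{n+1}$ by $t\,r_n(2\la-r_n)/R_n$ and $t\,r_{n+1}(2\la-r_{n+1})/R_n$, use (\ref{re2}) to express $\bt_n,\bt_{n+1}$ via $r_n,r_{n+1}$, and finally eliminate $r_{n+1}$ itself through (\ref{re1}) in the form $r_{n+1}=2\la-R_n/t-r_n$. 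After these substitutions every shifted index drops out and the identity collapses to the quadratic (\ref{ri2}).

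The $\ln h_n$ differentiation is routine; the genuine work sits in the last paragraph. The main obstacle is the simultaneous elimination of the five neighbouring quantities in $2t^2R_n'$: one needs the two terms $t\,r_n(2\la-r_n)/R_n$ and $t\,r_{n+1}(2\la-r_{n+1})/R_n$ to combine so that the denominator $R_n$ cancels (the key being $2\la-r_{n+1}=R_n/t+r_n$), and the $2\la$-linear pieces and the $r_nr_{n+1}$ cross terms must cancel against those arising from $(\bt_{n+1}-\bt_n)R_n$; tracking the powers of $t$ correctly through these cancellations is where an error is most likely. Division also requires $R_n\neq0$, which holds for $\la\neq0$, the case $\la=0$ being trivial since the weight is then Gaussian. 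An alternative route would be to differentiate the integral representations (\ref{Rnt}) and (\ref{rnt}) directly, but this generates integrals weighted by $(1+tx^2)^{-2}$ that are awkward to reduce, so the purely algebraic argument above is preferable.
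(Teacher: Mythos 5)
Your proposal is correct. The derivation of (\ref{ri1}) is essentially the paper's: both start from the Toda-type flow $2t^2\bt_n'=\bt_n\big(R_{n-1}(t)-R_n(t)\big)$ obtained by combining (\ref{be2}) with (\ref{p1}), and then eliminate $\bt_nR_{n-1}$ and $\bt_nR_n$ via (\ref{re3}) and (\ref{re2}). For (\ref{ri2}), however, your route is genuinely different. The paper differentiates the identity (\ref{iden}), $\bt_nR_n=r_n+2t\,\mathrm{p}(n,t)+2t\bt_n\bt_{n-1}$ (which itself encodes the $t$-evolution of $\mathrm{p}(n,t)$ through (\ref{dp1})), substitutes a closed expression for $4\bt_n\bt_{n-1}$, and lands on a \emph{product} of two factors equal to zero; the spurious algebraic factor must then be discarded by a separate argument (it would force a first-order difference equation for $r_n$, contradicting Theorem \ref{the}). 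You instead differentiate the $(S_1)$-relation (\ref{re1}) and feed in the Toda flow at the two consecutive indices $n$ and $n+1$, after which (\ref{re3}) at indices $n$ and $n+1$, (\ref{re2}), and the elimination $2\la-r_{n+1}=r_n+R_n/t$ make the $R_n$ denominators cancel and collapse the expression directly to (\ref{ri2}). I checked the algebra: the combination $\tfrac{t}{R_n}\big[r_n(2\la-r_n)-r_{n+1}(2\la-r_{n+1})\big]$ reduces to $2r_n-2\la+R_n/t$ and the remaining terms assemble exactly into the stated quadratic. What your approach buys is that (\ref{ri2}) is obtained unconditionally, with no factored alternative to rule out; what it costs is more index-shifting (five neighbouring quantities to expel) where a bookkeeping slip is easy, whereas the paper works only with quantities at indices $n$ and $n-1$ but must invoke the difference equation of Theorem \ref{the} to justify discarding a factor. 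The caveat $R_n\neq 0$ is common to both arguments (the paper's (\ref{ri1}) already divides by $R_n$) and is harmless since $\la=0$ is the classical Gaussian case.
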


\begin{proof}
From (\ref{be2}) we have
$$
\ln\bt_n=\ln h_n-\ln h_{n-1}.
$$
Using (\ref{p1}), it follows that
$$
2t^2\frac{d}{dt}\ln\bt_n=R_{n-1}(t)-R_n(t).
$$
That is,
\be\label{btd1}
2t^2\bt_n'(t)=\bt_nR_{n-1}(t)-\bt_nR_n(t).
\ee
By making use of (\ref{re2}) and (\ref{re3}), we obtain (\ref{ri1}).

Next, taking a derivative with respect to $t$ in (\ref{iden}) gives us
\be\label{equa}
(n+r_n(t))R_n'(t)+(R_n(t)-2)r_n'(t)-4t(\bt_n\bt_{n-1})'=0,
\ee
where use has been made of (\ref{re2}) and (\ref{dp1}). To derive the expression of $\bt_n\bt_{n-1}$, we replace $n$ by $n-1$ in (\ref{re1}) and multiply by $\bt_n$ on both sides to obtain
\be\label{bb}
4\bt_n\bt_{n-1}=(n+r_n(t))(n-1+2\la-r_n(t))-\frac{2r_n(t)(2\la-r_n(t))}{R_n(t)},
\ee
where we have used (\ref{re2}) and (\ref{re3}). Substituting (\ref{bb}) into (\ref{equa}), and using (\ref{ri1}) to eliminate $r_n'(t)$, we obtain
\bea
&&\left[2 t^2 R_n'(t)-R_n^2(t)-(2 t\: r_n(t)-2 \la  t+t-2)R_n(t)+4 t\: r_n(t)-4 \la  t\right]\nonumber\\
&\times&\left[(n+r_n(t))R_n^2(t) +2 t\: r_n(t) (r_n(t)-2 \la )\right]=0.\no
\eea
Obviously, this leads to two equations. However, the latter algebraic equation does not hold and should be discarded. (This is because the combination of this algebraic equation and (\ref{re1}) will produce a first-order difference equation for $r_n(t)$ and then for $\bt_n$, which makes a contradiction with Theorem \ref{the}.) Hence, we arrive at (\ref{ri2}).
\end{proof}
The following theorem reveals the relation between our problem and Painlev\'{e} equations.
\begin{theorem}\label{pve}
The auxiliary quantity $R_n(t)$, related to $\bt_n$ in (\ref{Rne}), satisfies the second-order nonlinear ordinary differential equation
\begin{small}
\bea\label{Rnd}
&&4 t^4 R_n(R_n-2)  R_n''-4 t^4 (R_n-1) (R_n')^2+4 t^3  R_n(R_n-2) R_n'-R_n^5+\left[(2 n+1+2 \la)t +5\right]R_n^4\nonumber\\
&-&4\left[(2 n+1+2 \la)t+2\right]R_n^3+\left[(1-4 \la ^2) t^2+4 (2 n+1+2\la)t+4\right]R_n^2+16 \la ^2 t^2 (R_n-1)=0.
\eea
\end{small}
Let $t=\frac{1}{s}$ and $R_n(t)=\frac{2W_n(s)}{W_n(s)-1}$. Then $W_n(s)$ satisfies the Painlev\'{e} V equation \cite{Gromak}
\be\label{pv}
W_{n}''=\frac{(3 W_n-1) (W_n')^2}{2W_n (W_n-1) }-\frac{W_n'}{s}+\frac{(W_n-1)^2 }{s^2}\left(\mu_1 W_n +\frac{\mu_2}{W_n}\right)+\frac{\mu_3 W_n}{s}+\frac{\mu_4 W_n(W_n+1) }{W_n-1},
\ee
with the parameters
$$
\mu_{1}=\frac{1}{8},\qquad \mu_2=-\frac{\la ^2}{2},\qquad \mu_3=n+\la+\frac{1}{2},\qquad \mu_4=-\frac{1}{2}.
$$
\end{theorem}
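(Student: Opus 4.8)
The plan is to eliminate $r_n(t)$ between the two Riccati equations (\ref{ri1}) and (\ref{ri2}), leaving a single second-order equation for $R_n(t)$. The key structural feature is that (\ref{ri2}) is \emph{linear} in $r_n(t)$: grouping its $r_n$-terms as $2t(R_n-2)r_n$, I can solve algebraically
$$
r_n(t)=\frac{2t^2R_n'(t)-R_n^2(t)+(2\la t-t+2)R_n(t)-4\la t}{2t\,(R_n(t)-2)},
$$
which expresses $r_n$ as a rational function of $R_n$, $R_n'$ and $t$.

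Differentiating this relation with respect to $t$ gives $r_n'(t)$ as a rational function of $R_n$, $R_n'$, $R_n''$ and $t$. Substituting both $r_n(t)$ and $r_n'(t)$ into (\ref{ri1}) and clearing the common denominator (a product of powers of $t$, of $R_n$ and of $R_n-2$) removes every trace of $r_n$ and produces a polynomial identity in $R_n$, $R_n'$, $R_n''$ and $t$. Collecting the coefficients of $R_n''$, $(R_n')^2$ and $R_n'$, and reducing the remaining polynomial in $R_n$ (which reaches degree five), should reproduce (\ref{Rnd}) exactly. This is the computationally heaviest step; it is entirely mechanical, but the intermediate expressions are bulky, so it is best confirmed with a symbolic algebra system.

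For the Painlev\'e V reduction I carry out the stated substitution $t=1/s$ and $R_n(t)=2W_n(s)/(W_n(s)-1)$, where a prime on $W_n$ denotes $d/ds$ as in (\ref{pv}). From $d/dt=-s^2\,d/ds$ one obtains
$$
R_n'(t)=\frac{2s^2\,W_n'}{(W_n-1)^2},\qquad
R_n''(t)=-s^2\,\frac{d}{ds}\!\left(\frac{2s^2\,W_n'}{(W_n-1)^2}\right),
$$
together with $t^k=s^{-k}$. Inserting these into (\ref{Rnd}) and multiplying out the factors of $(W_n-1)$ should, after simplification, recast the equation as the canonical Painlev\'e V (\ref{pv}); matching the coefficients of $(W_n')^2$, $W_n'/s$, $(W_n-1)^2W_n/s^2$, $(W_n-1)^2/(s^2W_n)$, $W_n/s$ and $W_n(W_n+1)/(W_n-1)$ then pins down $\mu_1=\tfrac18$, $\mu_2=-\tfrac{\la^2}{2}$, $\mu_3=n+\la+\tfrac12$ and $\mu_4=-\tfrac12$.

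The main obstacle throughout is algebraic bulk rather than conceptual difficulty: both the elimination of $r_n$ and the subsequent change of variables are fully determined procedures, but the degree-five polynomial in $R_n$ inside (\ref{Rnd}) and the many cross-terms generated by differentiating $2s^2W_n'/(W_n-1)^2$ mean that careful bookkeeping is essential to verify that all spurious factors cancel and that the four Painlev\'e V parameters emerge with exactly the stated normalization.
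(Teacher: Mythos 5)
Your proposal is correct and follows essentially the same route as the paper: the paper likewise solves the Riccati equation (\ref{ri2}) for $r_n(t)$ (exploiting its linearity in $r_n$), substitutes into (\ref{ri1}) — which forces the differentiation that brings in $R_n''$ — to obtain (\ref{Rnd}), and then applies the stated change of variables to reach (\ref{pv}). Your explicit formula for $r_n$ and the derivative formulas under $t=1/s$ check out, so the remaining work is exactly the mechanical bookkeeping you describe.
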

\begin{proof}
Solving for $r_n(t)$ from (\ref{ri2}) and substituting it into (\ref{ri1}), we arrive at (\ref{Rnd}). By the change of variables, equation (\ref{Rnd}) is transformed into (\ref{pv}).
\end{proof}
\begin{theorem}
The quantity $r_n(t)$ satisfies the second-order nonlinear differential equation
\begin{small}
\bea\label{rnd}
&&\big\{2 t^5 r_n' r_n''+ t^3(2 \la t +3 t+2-2 t\: r_n)(r_n')^2-2t^2\left[3r_n^2+2 (n-2 \la ) r_n-2   n\la\right]r_n'\nonumber\\
&+&4 r_n(r_n+n)(r_n-2\la)(t\:r_n-\la t-1)\big\}^2=t\left[t^3 (r_n')^2-2 r_n(r_n+n)(r_n-2\la)\right]\nonumber\\
&\times&\left[2 t^3 r_n''+ t(2 \la  t+3 t+2-2 t\: r_n)r_n'-6 r_n^2-4 (n-2 \la ) r_n+4   n\la\right]^2.
\eea
\end{small}
Then, the recurrence coefficient $\bt_n$ satisfies the second-order differential equation
\begin{small}
\bea\label{btd}
&&\big\{2 t^5 \bt_n' \bt_n''- t^3\left[(4\bt_n-2n-3-2 \la) t-2\right](\bt_n')^2-t^2\left[12\bt_n^2-8(n+\la)\bt_n+n(n+2\la)\right]\bt_n'\nonumber\\
&+& 2\bt_n(2\bt_n-n)(2\bt_n-n-2\la)\left[(2\bt_n-n-\la)t-1\right]\big\}^2=t\left[t^3 (\bt_n')^2-\bt_n (2\bt_n-n)(2\bt_n-n-2\la)\right]\nonumber\\
&\times&\big\{2 t^3 \bt_n''- t\left[(4\bt_n-2n-3-2 \la) t-2\right]\bt_n'-12\bt_n^2+8(n+\la)\bt_n-n(n+2\la)\big\}^2.
\eea
\end{small}
\end{theorem}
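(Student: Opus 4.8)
The plan is to eliminate the auxiliary quantity $R_n(t)$ and its derivative from the coupled Riccati system (\ref{ri1})--(\ref{ri2}), leaving a single second-order equation for $r_n(t)$; equation (\ref{btd}) then follows from (\ref{rnd}) by the linear substitution $r_n=2\bt_n-n$ of (\ref{rne}). The structural key is that, after clearing the denominator $R_n$, equation (\ref{ri1}) is a \emph{quadratic} in $R_n$,
\[
\frac{n+r_n}{2}\,R_n^2 + t^2 r_n'\,R_n - t\,r_n(2\la-r_n)=0,
\]
whose discriminant is exactly
\[
\Delta:=t\!\left[t^3(r_n')^2-2\,r_n(r_n+n)(r_n-2\la)\right].
\]
Since this $\Delta$ is precisely the factor multiplying $t$ on the right-hand side of (\ref{rnd}), I already expect the final equation to emerge in the ``squared'' shape $\{\cdots\}^2=t[\cdots][\cdots]^2$ produced by rationalizing a square root.

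Concretely, I would first differentiate (\ref{ri1}) in $t$; this brings in $r_n''$ alongside $R_n'$. I would then replace $R_n'$ using (\ref{ri2}), turning the differentiated relation into a polynomial identity in $R_n$ (of degree three, since the substitution $R_n'\sim R_n^2$ gets multiplied by a factor $R_n$). Next I would reduce this cubic modulo the quadratic (\ref{ri1}) --- repeatedly using $(n+r_n)R_n^2 = 2t\,r_n(2\la-r_n) - 2t^2 r_n' R_n$ to lower every power of $R_n$ --- until only a relation \emph{linear} in $R_n$ survives, say $c_1 R_n + c_0 = 0$ with $c_1,c_0$ explicit polynomials in $r_n,r_n',r_n''$ and $t$. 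I expect $c_1$ to coincide, up to an overall factor, with the inner bracket $2t^3 r_n'' + t(2\la t+3t+2-2t r_n)r_n' - 6r_n^2 - 4(n-2\la)r_n + 4n\la$ that appears squared on the right of (\ref{rnd}).

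Finally, eliminating $R_n$ between the linear relation and the quadratic (\ref{ri1}) gives (\ref{rnd}): writing $R_n=(-t^2 r_n'+\sqrt{\Delta})/(n+r_n)$ from the quadratic, substituting into $c_1 R_n+c_0=0$, isolating the radical as $c_1\sqrt{\Delta}=c_1 t^2 r_n'-(n+r_n)c_0$, and squaring yields $\big(c_1 t^2 r_n'-(n+r_n)c_0\big)^2=c_1^2\,\Delta$; with $\Delta=t[\cdots]$ this is exactly (\ref{rnd}), the $\{\cdots\}$ on the left being $c_1 t^2 r_n'-(n+r_n)c_0$. (Equivalently, one takes the resultant of (\ref{ri1}) and $c_1 R_n+c_0$ in $R_n$.) Equation (\ref{btd}) is then immediate: substituting $r_n=2\bt_n-n$, $r_n'=2\bt_n'$, $r_n''=2\bt_n''$ into (\ref{rnd}) and cancelling the resulting common numerical factor (note, for instance, that $t^3(r_n')^2-2r_n(r_n+n)(r_n-2\la)=4[t^3(\bt_n')^2-\bt_n(2\bt_n-n)(2\bt_n-n-2\la)]$, and that every other block scales likewise, producing an overall factor of $16$).

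The only real obstacle is the bulk of the elimination: the cubic-in-$R_n$ relation and its reduction modulo the quadratic generate sizable polynomials, and the squaring step must be simplified carefully --- realistically with a computer algebra system --- to confirm that the output collapses to the compact form (\ref{rnd}) with no stray factors. The choice of branch for $\sqrt{\Delta}$ is harmless because the identity is squared; but, as in the derivation of (\ref{ri2}) where a spurious algebraic factor had to be discarded, I would double-check that squaring has not silently introduced a parasitic solution that trivializes the equation.
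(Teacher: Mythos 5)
Your proposal is correct and matches the paper's argument: the paper likewise eliminates $R_n$ between the coupled Riccati equations by solving the quadratic (\ref{ri1}) for $R_n$, substituting into (\ref{ri2}), and removing the square root (your resultant/reduction-mod-the-quadratic organization is just a tidier way of carrying out the same elimination), and then obtains (\ref{btd}) from (\ref{rnd}) via $r_n=2\bt_n-n$ exactly as you describe. Your identification of the discriminant $\Delta$ with the factor $t[t^3(r_n')^2-2r_n(r_n+n)(r_n-2\la)]$ and the factor-of-$16$ bookkeeping are both right.
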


\begin{proof}
Solving for $R_n(t)$ from (\ref{ri1}) and substituting either solution into (\ref{ri2}), we obtain (\ref{rnd}) after removing the square roots. By making use of (\ref{re2}), equation (\ref{rnd}) is converted into (\ref{btd}).
\end{proof}

\section{Logarithmic Derivative of the Hankel Determinant, Sub-leading Coefficient and $\s$-form of Painlev\'{e} V}
We start from defining a quantity
\be\label{def}
H_n(t):=-\sum_{j=0}^{n-1}R_j(t).
\ee
It is easy to see from (\ref{hankel}) and (\ref{p1}) that $H_n(t)$ is related to the logarithmic derivative of the Hankel determinant as follows:
\be\label{hd}
H_n(t)=2t^2\frac{d}{dt}\ln D_n(t)-2n\la t.
\ee
\begin{proposition}
The quantity $H_n(t)$ can be expressed in terms of $\bt_n$ and $\mathrm{p}(n,t)$ by
\be\label{p3}
H_n(t)=2t\bt_n-4t\:\mathrm{p}(n,t)-nt(n+2\la).
\ee
\end{proposition}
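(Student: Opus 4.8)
The plan is to start from the definition (\ref{def}), $H_n(t)=-\sum_{j=0}^{n-1}R_j(t)$, and to notice that this very sum already appears in the identity (\ref{re4}). Solving (\ref{re4}) for $\sum_{j=0}^{n-1}R_j(t)$ and inserting it into the definition yields at once
\be
H_n(t)=-t\,r_n^2(t)+2(1+\la t)r_n(t)-2\bt_n\big(R_n(t)+R_{n-1}(t)\big),
\ee
so the whole task reduces to eliminating the auxiliary quantities $r_n(t)$, $R_n(t)$ and $R_{n-1}(t)$ in favour of $\bt_n$ and $\mathrm{p}(n,t)$.

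Next I would dispose of the two $R$-terms by different routes. For $\bt_nR_n(t)$ I would invoke the corollary identity (\ref{iden}), $\bt_nR_n(t)=r_n(t)+2t\,\mathrm{p}(n,t)+2t\bt_n\bt_{n-1}$, which is precisely the place where $\mathrm{p}(n,t)$ enters the computation. For $\bt_nR_{n-1}(t)$ I would instead use the $n-1$ shift of (\ref{re1}), namely $R_{n-1}(t)=t\big(2\la-r_{n-1}(t)-r_n(t)\big)$, so that $\bt_nR_{n-1}(t)$ becomes a combination of $\bt_n r_{n-1}(t)$ and $\bt_n r_n(t)$.

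The decisive point is that each of these two substitutions generates a term proportional to $\bt_n\bt_{n-1}$ — directly from (\ref{iden}), and through $\bt_n r_{n-1}(t)=\bt_n\big(2\bt_{n-1}-n+1\big)$ after applying (\ref{re2}) — and these cross products cancel. I expect this cancellation of the $\bt_n\bt_{n-1}$ term, together with the eventual cancellation of the quadratic $\bt_n^2$ and the linear $n\bt_n$ contributions, to be the only delicate part of the argument. A careless direct substitution of $r_n=2\bt_n-n$ and $R_n(t)=t(2n+1+2\la-2\bt_n-2\bt_{n+1})$ from (\ref{re2}) and (\ref{Rne}) into the displayed expression for $H_n(t)$ instead leaves surviving products $\bt_n\bt_{n+1}$ and $\bt_n\bt_{n-1}$ (equivalently, spurious $\mathrm{p}(n+1)$, $\mathrm{p}(n+2)$, $\mathrm{p}(n-1)$ terms), which never collapse to $\mathrm{p}(n,t)$ alone. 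This is exactly why routing $\bt_nR_n(t)$ through (\ref{iden}) rather than through (\ref{Rne}) is essential, and it is the step I would be most careful about.

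Finally, with the products gone, I would replace the remaining $r$'s by $\bt$'s via (\ref{re2}), that is $r_n=2\bt_n-n$, collect like terms, and check that the $\bt_n^2$ and $n\bt_n$ contributions indeed cancel pairwise, leaving $2t\bt_n-4t\,\mathrm{p}(n,t)-t n^2-2\la t n$. Recognizing $-tn^2-2\la t n=-nt(n+2\la)$ then produces the claimed formula (\ref{p3}).
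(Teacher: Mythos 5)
Your derivation is correct, but it takes a genuinely different route from the paper. You evaluate $\sum_{j=0}^{n-1}R_j(t)$ in one stroke via the compatibility identity (\ref{re4}) and then eliminate $\bt_nR_n(t)$ through the $t$-evolution identity (\ref{iden}) (which is where $\mathrm{p}(n,t)$ enters) and $\bt_nR_{n-1}(t)$ through the shift of (\ref{re1}); I checked the cancellations you flag — the $\bt_n\bt_{n-1}$, $\bt_n^2$ and $n\bt_n$ terms do all drop out, leaving exactly (\ref{p3}). The paper instead never touches (\ref{re4}) or (\ref{iden}): it writes each $R_j(t)=t(2j+1+2\la-2\bt_j-2\bt_{j+1})$ directly from (\ref{re1})--(\ref{re2}), sums, and invokes the telescoped relation $\sum_{j=0}^{n-1}\bt_j=-\mathrm{p}(n,t)$ from (\ref{sum}), so $\mathrm{p}(n,t)$ appears through the purely algebraic link (\ref{be1}) between the sub-leading coefficients and the recurrence coefficients rather than through the $t$-derivative structure. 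The paper's argument is shorter and uses only the "difference" identities; yours costs more bookkeeping but has the instructive feature of showing that (\ref{re4}) and (\ref{iden}) are mutually consistent with (\ref{sum}), and your observation about why a naive substitution of (\ref{Rne}) leaves uncancelled $\bt_n\bt_{n\pm1}$ products is a fair warning about that particular path. Both proofs are valid; no gap.
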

\begin{proof}
Using (\ref{re1}) and (\ref{re2}), we find from (\ref{def}) that
\bea
H_n(t)&=&2t\sum_{j=0}^{n-1}\bt_j+2t\sum_{j=0}^{n-1}\bt_{j+1}-nt(n+2\la)\nonumber\\
&=&4t\sum_{j=0}^{n-1}\bt_j+2t\bt_n-nt(n+2\la).\nonumber
\eea
In view of (\ref{sum}), we obtain the desired result.
\end{proof}
\begin{theorem}
The quantity $H_n(t)$ satisfies the following second-order differential equation
\begin{small}
\bea\label{hnd}
&&\big\{4 t^7 (H_n'')^2-4t^4( t^2 H_n'-t H_n+2 \la  t-2)H_n''-t^3 \left[20 t H_n+(4 \la ^2-1) t^2+8(4n-\la)t+4\right](H_n')^2\nonumber\\
&+&8 t^5 (H_n')^3+2 t^2 \left[8 t H_n^2+ \big((4 \la ^2-1) t^2+16 (n-\la) t+12\big)H_n+2 (\la  t-3) \big(4 n (\la  t-1)+t\big)\right]H_n'\nonumber\\
&-&4 t^2 H_n^3+t  \left[(1-4 \la ^2) t^2-8(n-2\la)t-12\right]H_n^2-4\big[\la t^3 (2n\la-2 \la ^2+1)-2t^2 (4n\la-3 \la^2+1)\nonumber\\
&+&6 t (n-\la )+2\big]H_n+8 (\la  t-1)^2  \left[2 n (\la  t-1)+t\right]\big\}^2=16 \left[t H_n-2 t^2 H_n'+(\la  t-1)^2\right]\big\{2 t^4 H_n''-2 t H_n^2\nonumber\\
&+&t^2(4 H_n-t+8 n) H_n'+\left[(1-2 \la ^2) t^2-4 (n-\la)t-2\right]H_n-2 (\la  t-1) \left[2 n (\la  t-1)+t\right]\big\}^2.
\eea
\end{small}
\end{theorem}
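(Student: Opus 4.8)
The plan is to obtain (\ref{hnd}) by exactly the mechanism already used for $\bt_n$ and $r_n(t)$ in (\ref{btd}) and (\ref{rnd}): first tie $H_n(t)$ and its first two $t$-derivatives to the pair $\big(R_n(t),r_n(t)\big)$ through purely algebraic identities, then use the coupled Riccati system (\ref{ri1})--(\ref{ri2}) to strip away all derivatives of $R_n$ and $r_n$, and finally eliminate $R_n$ and $r_n$ between the resulting three relations. Since (\ref{def}) writes $H_n(t)$ as the partial sum $-\sum_{j=0}^{n-1}R_j(t)$, the natural starting point for the zeroth-order relation is (\ref{re4}), which already contains this sum: substituting $\sum_{j=0}^{n-1}R_j(t)=-H_n(t)$, using (\ref{re3}) to trade $R_{n-1}(t)$ for $t\,r_n(t)(2\la-r_n(t))/R_n(t)$, and inserting $\bt_n=(n+r_n(t))/2$ from (\ref{re2}), yields a derivative-free expression $H_n(t)=F\big(R_n(t),r_n(t)\big)$.

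For the first derivative I would avoid differentiating this messy $F$ and instead route through (\ref{p3}). Differentiating (\ref{p3}) and inserting (\ref{dp1}) collapses the $\mathrm{p}(n,t)$-terms and gives $H_n'(t)=2\bt_n+2t\bt_n'+4\bt_n\bt_{n-1}-n(n+2\la)$. Then (\ref{btd1}) supplies $2t\bt_n'=\bt_n\big(R_{n-1}(t)-R_n(t)\big)/t$, (\ref{re3}) again removes $R_{n-1}(t)$, and (\ref{bb}) supplies $4\bt_n\bt_{n-1}$; after simplification everything telescopes to the compact relation $H_n'(t)=r_n(t)(2\la-r_n(t))\big(1-1/R_n(t)\big)-(n+r_n(t))R_n(t)/(2t)$, a second derivative-free function $G\big(R_n(t),r_n(t)\big)$. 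Differentiating $G$ once more and using (\ref{ri1}) and (\ref{ri2}) to replace $r_n'(t)$ and $R_n'(t)$ produces the third relation $H_n''(t)=K\big(R_n(t),r_n(t)\big)$, again free of derivatives of $R_n$ and $r_n$.

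At this point I have three algebraic equations $H_n=F$, $H_n'=G$, $H_n''=K$ in the two unknowns $R_n(t)$ and $r_n(t)$, with coefficients polynomial in $t,n,\la$. Eliminating both unknowns gives a single relation $\Phi(H_n,H_n',H_n'')=0$, which is (\ref{hnd}). Concretely I would solve the two lower-order relations $H_n=F$ and $H_n'=G$ for $R_n$ and $r_n$; since these are quadratic in the unknowns, a square root is introduced, and substituting into $H_n''=K$ and clearing that square root by squaring produces exactly the $\{\cdots\}^2=[\cdots]\{\cdots\}^2$ shape that already appears in (\ref{rnd}) and (\ref{btd}), which is the form of (\ref{hnd}).

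The hard part will be this final elimination: it is a two-variable resultant computation whose intermediate expressions are very large, and the delicate point is to carry out the square-root removal so that the answer collapses to precisely (\ref{hnd}) rather than to an extraneous higher-degree multiple of it; in practice this requires computer algebra, just as for (\ref{Rnd}), (\ref{rnd}) and (\ref{btd}). One should also verify that the spurious algebraic branch, analogous to the factor discarded in the derivation of (\ref{ri2}), does not enter, so that the genuine solution of the Riccati system is the one characterized by (\ref{hnd}).
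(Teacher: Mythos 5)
Your plan is sound and would reproduce (\ref{hnd}), but it is organized differently from the paper's proof, and the difference matters for the size of the computation. The paper never forms $H_n''$ as an explicit function of $(R_n,r_n)$ and never invokes the second Riccati equation (\ref{ri2}). Instead it sets $X=\bt_nR_{n-1}$, $Y=\bt_nR_n$, reads (\ref{re4}) (with $\sum_jR_j=-H_n$) and (\ref{btd1}) as a \emph{linear} system for $X,Y$, solves it, and then imposes the product relation $XY=\tfrac12\,t\,r_n(2\la-r_n)(n+r_n)$ coming from (\ref{re3}); this yields the single relation (\ref{hr}) in $H_n,r_n,r_n'$ from which $R_n$ has been eliminated at no cost. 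The second ingredient, the quadratic (\ref{rh}), namely $t r_n^2+2(1-\la t)r_n+2tH_n'-H_n=0$, is obtained there from (\ref{p2}) and (\ref{p3}); in your scheme it is precisely the combination $H_n-2tH_n'=F-2tG$ of your first two relations, in which the $R_n$-dependence cancels identically --- worth noticing, since it collapses your two-variable elimination to a one-variable one. The paper then solves (\ref{rh}) for $r_n$, so that $H_n''$ enters only through $r_n'=\frac{d}{dt}r_n(H_n,H_n')$, and substitutes into (\ref{hr}). Your route --- three relations $H_n=F$, $H_n'=G$, $H_n''=K$ in $(R_n,r_n)$ followed by a resultant --- is logically correct (your closed form for $G$ checks out against (\ref{p3}), (\ref{dp1}), (\ref{btd1}), (\ref{bb})), but it additionally requires (\ref{ri2}), carries four algebraic branches ($r_n$ and then $R_n$ each from a quadratic) instead of two, and therefore runs a higher risk of producing the extraneous factors you rightly flag. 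Both eliminations terminate in the same ODE.
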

\begin{proof}
Let us denote by
$$
X:=\bt_n R_{n-1}(t),
$$
$$
Y:=\bt_n R_n(t).
$$
Then equations (\ref{re4}) and (\ref{btd1}) become
$$
t\: r_n^2(t)-2(1+\la t)r_n(t)+H_n(t)+2X+2Y=0
$$
and
$$
X-Y-t^2 r_n'(t)=0,
$$
where use has been made of (\ref{re2}).
Solving this linear system for $X$ and $Y$, we have
\be\label{X}
X=-\frac{1}{4}\left[H_n(t)-2t^2r_n'(t)+t\:r_n^2(t)-2(\la t+1)r_n(t)\right],
\ee
\be\label{Y}
Y=-\frac{1}{4}\left[H_n(t)+2t^2r_n'(t)+t\:r_n^2(t)-2(\la t+1)r_n(t)\right].
\ee
From (\ref{re3}) and with the aid of (\ref{re2}), we find
\be\label{pro}
X\cdot Y=\frac{t\:r_n(t)(2\la-r_n(t))(n+r_n(t))}{2}.
\ee
Substituting (\ref{X}) and (\ref{Y}) into (\ref{pro}) gives
\bea\label{hr}
&&\left[H_n(t)-2t^2r_n'(t)+t\:r_n^2(t)-2(\la t+1)r_n(t)\right]\left[H_n(t)+2t^2r_n'(t)+t\:r_n^2(t)-2(\la t+1)r_n(t)\right]\nonumber\\
&=&8t\:r_n(t)(2\la-r_n(t))(n+r_n(t)).
\eea

On the other hand, the combination of (\ref{p3}) and (\ref{re2}) shows that $\mathrm{p}(n,t)$ can be expressed in terms of $H_n(t)$ and $r_n(t)$. Substituting this expression for $\mathrm{p}(n,t)$ and (\ref{Y}) for $\bt_nR_n(t)$ into (\ref{p2}), we obtain a quadratic equation for $r_n(t)$,
\be\label{rh}
t\:r_n^2(t)+2(1-\la t)r_n(t)+2 t H_n'(t)-H_n(t)=0.
\ee
Substituting either solution for $r_n(t)$ into (\ref{hr}), we arrive at equation (\ref{hnd}) after removing the square roots.
\end{proof}
\begin{remark}
From (\ref{def}) we have
$$
R_n(t)=H_n(t)-H_{n+1}(t).
$$
Then one can also derive the second-order difference equation satisfied by $H_n(t)$ from (\ref{re2}), (\ref{re3}) and (\ref{re4}). We would not present the result here and leave it to the reader.
\end{remark}
\begin{theorem}\label{pnt}
The sub-leading coefficient $\mathrm{p}(n,t)$ satisfies the second-order nonlinear differential equation
\bea\label{sod}
&&16 t^6 (\mathrm{p}''(n,t))^2+64 t^5 \mathrm{p}'(n,t) \mathrm{p}''(n,t)-64 t^5 (\mathrm{p}'(n,t))^3\nonumber\\
&-&4t^2(\mathrm{p}'(n,t))^2\left[16 t^2 \mathrm{p}(n,t)+(2 n+3+2 \la)(2 n-5+2 \la) t^2+4t (2 n-1-2 \la)+4\right]\nonumber\\
&-&4t\:\mathrm{p}'(n,t)\left[4 \big((2 n-1-2\la) t+2\big) \mathrm{p}(n,t)+  n(n-1)\big((2 n-1+2\la) t+2\big) \right]\nonumber\\
&-&16 \mathrm{p}^2(n,t)-8 n(n-1)  \mathrm{p}(n,t)-n^2(n-1)^2 =0,
\eea
where $\mathrm{p}'(n,t):=\frac{d}{dt}\mathrm{p}(n,t),\;\mathrm{p}''(n,t):=\frac{d^{2}}{dt^{2}}\mathrm{p}(n,t)$.
Let $t=\frac{1}{s}$ and $\mathrm{p}(n,t)=\s_n(s)-\frac{n(n-1)}{4}$. Then $\s_n(s)$ satisfies the Jimbo-Miwa-Okamoto $\s$-form of Painlev\'{e} V \cite[(C.45)]{Jimbo1981},
\be\label{jmo}
\big(s\s_{n}''\big)^2=\big[\s_{n}-s\s_{n}'+2\big(\s_{n}'\big)^2+(\nu_{0}+\nu_{1}+\nu_{2}+\nu_{3})\s_{n}'\big]^2-4\big(\nu_{0}+\s_{n}'\big)
\big(\nu_{1}+\s_{n}'\big)\big(\nu_{2}+\s_{n}'\big)\big(\nu_{3}+\s_{n}'\big),
\ee
with parameters $\nu_{0}=0,\; \nu_{1}=-\frac{n}{2},\; \nu_{2}=\la,\; \nu_{3}=-\frac{n-1}{2}$.
\end{theorem}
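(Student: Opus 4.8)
The plan is to eliminate the auxiliary quantity $r_n(t)$ (equivalently $\bt_n=(n+r_n(t))/2$) from the first-order relations connecting $\mathrm{p}(n,t)$ to $r_n(t)$ and $R_n(t)$, thereby producing a single second-order ODE in $\mathrm{p}(n,t)$, and then to pass to the $\s$-form by a direct change of variables. The two relations carrying the $t$-derivative of $\mathrm{p}$ are (\ref{p2}), which after (\ref{re2}) reads $\bt_nR_n(t)=r_n(t)-2t^2\mathrm{p}'(n,t)$, and (\ref{dp1}), which gives $\bt_n\bt_{n-1}=-\mathrm{p}(n,t)-t\mathrm{p}'(n,t)$.

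First I would turn these into a purely algebraic constraint on $r_n(t)$. From (\ref{p2}) and (\ref{re2}) one has $R_n(t)=2\,(r_n(t)-2t^2\mathrm{p}'(n,t))/(n+r_n(t))$; substituting this together with $\bt_n\bt_{n-1}=-\mathrm{p}-t\mathrm{p}'$ into the identity (\ref{bb}) and clearing the denominator produces one relation among $r_n(t)$, $\mathrm{p}(n,t)$ and $\mathrm{p}'(n,t)$. A convenient feature is that the cubic terms in $r_n$ cancel, so the relation is in fact quadratic in $r_n(t)$, with coefficients polynomial in $t$, $\mathrm{p}$ and $\mathrm{p}'$; call it $Q(r_n;\mathrm{p},\mathrm{p}')=0$.

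Next I need a derivative-carrying relation. Combining (\ref{p3}) with (\ref{re2}) gives $H_n(t)=t\,r_n(t)-4t\,\mathrm{p}(n,t)-tn(n+2\la-1)$; inserting this and its $t$-derivative into relation (\ref{rh}) yields a Riccati-type equation that expresses $r_n'(t)$ polynomially in $r_n(t)$, $\mathrm{p}$ and $\mathrm{p}'$. Differentiating $Q=0$ once in $t$ brings in $r_n'$ and $\mathrm{p}''$; after replacing $r_n'$ by the Riccati expression I obtain a second relation in $r_n$, $\mathrm{p}$, $\mathrm{p}'$, $\mathrm{p}''$. Eliminating $r_n(t)$ between this relation and $Q=0$ --- using $Q=0$ to lower all powers of $r_n$ so that the other relation becomes linear in $r_n$, solving for $r_n$ rationally, and back-substituting into $Q=0$ --- yields (\ref{sod}) after discarding the spurious factor, exactly in the spirit of the derivation of (\ref{hnd}).

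For the $\s$-form I would set $t=1/s$ and $\mathrm{p}(n,t)=\s_n(s)-n(n-1)/4$, so that $\s_n'(s)=-t^2\mathrm{p}'$ and $s\s_n''(s)=2t^2\mathrm{p}'+t^3\mathrm{p}''$. The decisive simplification is that (\ref{dp1}) converts $\s_n-s\s_n'$ into $n(n-1)/4-\bt_n\bt_{n-1}=n(n-1)/4+\mathrm{p}+t\mathrm{p}'$, an expression in $\mathrm{p}$ and $\mathrm{p}'$ alone; likewise each factor $\nu_i+\s_n'$ becomes linear in $\mathrm{p}'$. Substituting into (\ref{jmo}) with $\nu_0=0,\ \nu_1=-n/2,\ \nu_2=\la,\ \nu_3=-(n-1)/2$ then reproduces (\ref{sod}) up to an overall nonzero factor, establishing the equivalence. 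The main obstacle is the elimination of $r_n(t)$: the resultant/square-root removal is bulky and one must correctly identify and drop the extraneous branch; checking that the four parameters $\nu_i$ are precisely those listed is a bookkeeping matter once the change of variables is carried out.
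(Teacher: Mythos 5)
Your proposal is correct and follows essentially the same strategy as the paper: reduce everything to a quadratic relation $Q(r_n;\mathrm{p},\mathrm{p}')=0$ (your route via (\ref{bb}), (\ref{dp1}) and (\ref{p2}) yields exactly the quadratic the paper obtains by subtracting (\ref{hrp}) from (\ref{hn})), couple it with the Riccati-type expression for $r_n'$ coming from (\ref{rh}) with $H_n$ written through (\ref{hrp}), eliminate $r_n$, and pass to the $\s$-form by the stated change of variables. The only difference is cosmetic: the paper eliminates $r_n$ by solving the quadratic explicitly and clearing the resulting square roots, whereas you differentiate $Q=0$ and take a resultant, which is an equivalent procedure.
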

\begin{proof}
Substituting the expression of $\bt_nR_n(t)$ in (\ref{Y}) into (\ref{p2}), we have
$$
2t^2 r_n'(t)=8t^2\mathrm{p}'(n,t)+2(\la t-1)r_n(t)-t\: r_n^2(t)-H_n(t).
$$
Plugging it into (\ref{hr}) produces a linear equation for $H_n(t)$, and we find
\be\label{hn}
H_n(t)=\frac{8t^4(\mathrm{p}'(n,t))^2-t\:r_n^2(t)(n+2t^2\mathrm{p}'(n,t))+2t\:r_n(t)\left[n\la+2t(\la t-1)\mathrm{p}'(n,t)\right]}{2t^2\mathrm{p}'(n,t)-r_n(t)}.
\ee
With the aid of (\ref{re2}), we write (\ref{p3}) as
\be\label{hrp}
H_n(t)=t(n+r_n(t))-4t\:\mathrm{p}(n,t)-nt(n+2\la).
\ee
Subtracting (\ref{hrp}) from (\ref{hn}) gives us a quadratic equation for $r_n(t)$:
\bea
&&r_n^2(t)\left(n-1+2t^2\mathrm{p}'(n,t)\right)+r_n(t)\left[n(n-1)+4\mathrm{p}(n,t)+2t(t+2-2\la t)\mathrm{p}'(n,t)\right]\nonumber\\
&-&2t^2 \mathrm{p}'(n,t)\left[n(n-1+2\la)+4\mathrm{p}(n,t)+4t\:\mathrm{p}'(n,t)\right]=0.\nonumber
\eea
Then $r_n(t)$ can be expressed in terms of $\mathrm{p}(n,t)$ and $\mathrm{p}'(n,t)$, denoted by
\be\label{rp}
r_n(t)=f\left[\mathrm{p}(n,t),\mathrm{p}'(n,t)\right],
\ee
where $f\left[\mathrm{p}(n,t),\mathrm{p}'(n,t)\right]$ is an expression of $\mathrm{p}(n,t)$ and $\mathrm{p}'(n,t)$ and is explicitly known.
It follows from (\ref{hrp}) that
\be\label{hp}
H_n(t)=t\left(n+f\left[\mathrm{p}(n,t),\mathrm{p}'(n,t)\right]\right)-4t\:\mathrm{p}(n,t)-nt(n+2\la).
\ee
Substituting (\ref{rp}) and (\ref{hp}) into (\ref{rh}), we obtain the second-order differential equation satisfied by $\mathrm{p}(n,t)$ in (\ref{sod}) after clearing the square roots. Under the given change of variables, equation (\ref{sod}) is transformed into (\ref{jmo}).
\end{proof}

\section{An Alternative Derivation of the Painlev\'{e} V Equation}
In the papers \cite{BC} and \cite{Chen2012}, the authors studied the monic polynomials orthogonal with respect to a deformed Laguerre weight
$$
\tilde{w}(x;t,\al)=x^{\al}\mathrm{e}^{-x}(x+t)^{\lambda},\qquad x\in \mathbb{R}^{+},\; \al>-1,\; t>0
$$
and the orthogonality is
$$
\int_{0}^{\infty}\tilde{P}_m(x;t,\al)\tilde{P}_n(x;t,\al)x^{\al}\mathrm{e}^{-x}(x+t)^{\lambda}dx=\tilde{h}_n(t,\al)\delta_{mn},
$$
where $\tilde{P}_n(x;t,\al), n=0,1,2,\ldots$ are monic polynomials of degree $n$ and have the expansion
$$
\tilde{P}_n(x;t,\al)=x^n+\mathrm{\tilde{p}}(n,t,\al)x^{n-1}+\cdots+\tilde{P}_n(0;t,\al).
$$
The associated Hankel determinant is
$$
\tilde{D}_n(t,\al):=\det\left(\int_{0}^{\infty}x^{j+k}x^{\al}\mathrm{e}^{-x}(x+t)^{\lambda}dx\right)_{j,k=0}^{n-1}.
$$

By introducing an auxiliary variable
$$
\tilde{R}_n(t,\al)=\frac{\lambda}{\tilde{h}_n(t,\al)}\int_{0}^{\infty}\frac{\tilde{P}_n^2(x;t,\al)}{x+t}\tilde{w}(x;t,\al)dx,
$$
Basor and Chen \cite{BC} proved that
$$
y(t,\al)=\frac{\tilde{R}_n(t,\al)}{\tilde{R}_n(t,\al)-1}
$$
satisfies the Painlev\'{e} V equation
$$
y''=\frac{(3 y-1) (y')^2}{2y (y-1) }-\frac{y'}{t}+\frac{(y-1)^2 }{t^2}\left(\mu_1 y +\frac{\mu_2}{y}\right)+\frac{\mu_3 y}{t}+\frac{\mu_4 y(y+1) }{y-1}
$$
with parameters
$$
\mu_{1}=\frac{\al^2}{2},\qquad \mu_2=-\frac{\la^2}{2},\qquad \mu_3=2n+1+\al+\la,\qquad \mu_4=-\frac{1}{2},
$$
and denoted as
$$
P_{V}\left(\frac{\al^2}{2},-\frac{\la^2}{2},2n+1+\al+\la,-\frac{1}{2}\right).
$$
Chen and McKay \cite{Chen2012} (see also \cite{BC}) found that $\tilde{\s}_n(t,\al)=n(n+\al)+\mathrm{\tilde{p}}(n,t,\al)$ satisfies the $\s$-form of Painlev\'{e} V
\be\label{jmo1}
\big(s\tilde{\s}_{n}''\big)^2=\big[\tilde{\s}_{n}-t\tilde{\s}_{n}'+2\big(\tilde{\s}_{n}'\big)^2+(\nu_{0}+\nu_{1}+\nu_{2}+\nu_{3})\tilde{\s}_{n}'\big]^2
-4\big(\nu_{0}+\tilde{\s}_{n}'\big)
\big(\nu_{1}+\tilde{\s}_{n}'\big)\big(\nu_{2}+\tilde{\s}_{n}'\big)\big(\nu_{3}+\tilde{\s}_{n}'\big)
\ee
with parameters $\nu_{0}=0,\; \nu_{1}=-n,\; \nu_{2}=\la,\; \nu_{3}=-n-\al$.

Following the similar procedure in \cite{Min2020}, we will establish the relation between our problem and \cite{BC,Chen2012}.
From the orthogonality (\ref{or}) we have for $m, n=0, 1, 2, \ldots$,
\bea
h_{2n}(t)\delta_{2m,2n}&=&\int_{-\infty}^{\infty}P_{2m}(x)P_{2n}(x)\mathrm{e}^{-x^2}\left(1+t\:x^2\right)^\la dx\no\\
&=&2\int_{0}^{\infty}P_{2m}(x)P_{2n}(x)\mathrm{e}^{-x^2}\left(1+t\:x^2\right)^\la dx\no\\
&=&t^{\la}\int_{0}^{\infty}P_{2m}(\sqrt{x})P_{2n}(\sqrt{x})x^{-\frac{1}{2}}\mathrm{e}^{-x}\left(x+\frac{1}{t}\right)^{\la} dx\no
\eea
and
\bea
h_{2n+1}(t)\delta_{2m+1,2n+1}&=&\int_{-\infty}^{\infty}P_{2m+1}(x)P_{2n+1}(x)\mathrm{e}^{-x^2}\left(1+t\:x^2\right)^\la dx\no\\
&=&2\int_{0}^{\infty}P_{2m+1}(x)P_{2n+1}(x)\mathrm{e}^{-x^2}\left(1+t\:x^2\right)^\la dx\no\\
&=&t^{\la}\int_{0}^{\infty}\frac{P_{2m+1}(\sqrt{x})}{\sqrt{x}}\frac{P_{2n+1}(\sqrt{x})}{\sqrt{x}}x^{\frac{1}{2}}\mathrm{e}^{-x}\left(x+\frac{1}{t}\right)^{\la} dx.\no
\eea
It follows that for $n=0, 1, 2,\ldots$,
\be\label{rela}
\tilde{P}_{n}\left(x;\frac{1}{t},-\frac{1}{2}\right)=P_{2n}(\sqrt{x}),\qquad\qquad \tilde{P}_{n}\left(x;\frac{1}{t},\frac{1}{2}\right)=\frac{P_{2n+1}(\sqrt{x})}{\sqrt{x}},
\ee
and
\be\label{rela1}
\tilde{h}_n\left(\frac{1}{t},-\frac{1}{2}\right)=\frac{h_{2n}(t)}{t^{\la}},\qquad\qquad \tilde{h}_n\left(\frac{1}{t},\frac{1}{2}\right)=\frac{h_{2n+1}(t)}{t^{\la}}.
\ee
According to the definition of $R_n(t)$ in (\ref{Rnt}) and using the above results, we find
$$
R_{2n}(t)=2\tilde{R}_{n}\left(\frac{1}{t},-\frac{1}{2}\right),\qquad\qquad R_{2n+1}(t)=2\tilde{R}_{n}\left(\frac{1}{t},\frac{1}{2}\right),\qquad n=0, 1, 2,\ldots.
$$
Since
$$
y\left(t,-\frac{1}{2}\right)=\frac{\tilde{R}_n\left(t,-\frac{1}{2}\right)}{\tilde{R}_n\left(t,-\frac{1}{2}\right)-1}
$$
and
$$
y\left(t,\frac{1}{2}\right)=\frac{\tilde{R}_n\left(t,\frac{1}{2}\right)}{\tilde{R}_n\left(t,\frac{1}{2}\right)-1}
$$
satisfies the Painlev\'{e} V equations
$$
P_{V}\left(\frac{1}{8}, -\frac{\la^2}{2}, 2n+\la+\frac{1}{2}, -\frac{1}{2}\right)
$$
and
$$
P_{V}\left(\frac{1}{8}, -\frac{\la^2}{2}, 2n+\la+\frac{3}{2}, -\frac{1}{2}\right),
$$
respectively, we finally obtain the result in Theorem \ref{pve}.

To proceed, from (\ref{rela}) we have for $n=0, 1, 2,\ldots$,
$$
\mathrm{\tilde{p}}\left(n,\frac{1}{t},-\frac{1}{2}\right)=\mathrm{p}(2n,t),\qquad\qquad \mathrm{\tilde{p}}\left(n,\frac{1}{t},\frac{1}{2}\right)=\mathrm{p}(2n+1,t).
$$
Note that
$$
\tilde{\s}_n\left(t,-\frac{1}{2}\right)=\mathrm{\tilde{p}}\left(n,t,-\frac{1}{2}\right)+n\left(n-\frac{1}{2}\right)
$$
and
$$
\tilde{\s}_n\left(t,\frac{1}{2}\right)=\mathrm{\tilde{p}}\left(n,t,\frac{1}{2}\right)+n\left(n+\frac{1}{2}\right)
$$
satisfy the $\s$-form of Painlev\'{e} V in (\ref{jmo1}) with parameters
$$
\nu_{0}=0,\; \nu_{1}=-n,\; \nu_{2}=\la,\; \nu_{3}=-n+\frac{1}{2}
$$
and
$$
\nu_{0}=0,\; \nu_{1}=-n,\; \nu_{2}=\la,\; \nu_{3}=-n-\frac{1}{2},
$$
respectively. Then we readily obtain the result in Theorem \ref{pnt}.

In addition, using (\ref{rela1}) we find the following relations for the Hankel determinants:
$$
D_{2n}(t)=t^{2n\la}\tilde{D}_n\left(\frac{1}{t},-\frac{1}{2}\right)\tilde{D}_n\left(\frac{1}{t},\frac{1}{2}\right),
$$
$$
D_{2n+1}(t)=t^{(2n+1)\la}\tilde{D}_{n+1}\left(\frac{1}{t},-\frac{1}{2}\right)\tilde{D}_n\left(\frac{1}{t},\frac{1}{2}\right).
$$
However, it should be pointed out that one can not derive the differential equation (\ref{hnd}) by using the above relations and the results in \cite{BC,Chen2012}.

\section{Large $n$ Asymptotics of the Recurrence Coefficient, Sub-leading Coefficient and the Hankel Determinant}
In random matrix theory (RMT), our Hankel determinant $D_n(t)$ can be viewed as the partition function for the perturbed Gaussian unitary ensemble \cite{Mehta}:
$$
D_n(t)=\frac{1}{n!}\int_{\mathbb{R}^n}\prod_{1\leq j<k\leq n}(x_j-x_k)^2\prod_{l=1}^n \mathrm{e}^{-x_{l}^2}\left(1+t\:x_{l}^2\right)^\la dx_l.
$$
Here $x_1, x_2, \ldots, x_n$ are the eigenvalues of $n\times n$ Hermitian matrices from the ensemble, and the joint probability density function is
$$
p(x_1, x_2, \ldots, x_n)\prod_{k=1}^n dx_k=\frac{1}{n!\:D_n(t)}\prod_{1\leq j<k\leq n}(x_j-x_k)^2\prod_{l=1}^n \mathrm{e}^{-x_{l}^2}\left(1+t\:x_{l}^2\right)^\la dx_l.
$$
See also \cite{Deift,Forrester} for more information on this topic.

From Dyson's Coulomb fluid approach \cite{Dyson}, the eigenvalues (particles) can be approximated as a continuous fluid with an equilibrium density $\s(x)$ supported on $J\subset\mathbb{R}$ for sufficiently large $n$. Since our potential $\mathrm{v}(x)$ is even, it was shown by Corollary 1.12 in \cite[p. 203]{Saff} that when
$$
x \mathrm{v}'(x)=2x^2\left(1-\frac{\la t}{1+tx^2}\right)
$$
is positive and increasing on $(0,\infty)$, $J$ is a single interval denoted by $(-b,b)$. It follows that $\la t\leq 1\; (t>0)$. This is the so-called one-cut case.
Note that if $\la$ and $t$ does not satisfy the above condition, then the support $J$ could be the union of several disjoint intervals. In this section, we would like to study the large $n$ behavior of the recurrence coefficient $\bt_n(t)$, the sub-leading coefficient $\mathrm{p}(n,t)$ and the Hankel determinant $D_n(t)$ in the one-cut case.

According to \cite{ChenIsmail}, the equilibrium density $\sigma(x)$ is determined by minimizing the free energy functional
\be\label{fe1}
F[\s]:=\int_{-b}^{b}\s(x)\mathrm{v}(x)dx-\int_{-b}^{b}\int_{-b}^{b}\s(x)\ln|x-y|\s(y)dxdy
\ee
subject to
\be\label{con}
\int_{-b}^{b}\s(x)dx=n.
\ee

Then the density $\s(x)$ satisfies the integral equation
\be\label{ie}
\mathrm{v}(x)-2\int_{-b}^{b}\ln|x-y|\s(y)dy=A,\qquad x\in (-b,b),
\ee
where $A$ is the Lagrange multiplier for the constraint (\ref{con}). Note that $A$ is a constant independent of $x$ but it depends on $n$ and $t$.

The combination of (\ref{fe1}), (\ref{con}) and (\ref{ie}) gives an alternative expression of $F[\s]$:
\be\label{fe3}
F[\s]=\frac{nA}{2}+\frac{1}{2}\int_{-b}^{b}\s(x)\mathrm{v}(x)dx.
\ee
Equation (\ref{ie}) is transformed into the following singular integral equation by taking a derivative with respect to $x$,
\be\label{sie}
\mathrm{v}'(x)-2P\int_{-b}^{b}\frac{\sigma(y)}{x-y}dy=0,\qquad x\in (-b,b),
\ee
where $P$ denotes the Cauchy principal value.

The solution of (\ref{sie}) subject to the boundary condition $\sigma(-b)=\sigma(b)=0$ and $\sigma(x)\geq 0$ on $(-b,b)$ is
\be\label{sigma}
\sigma(x)=\frac{\sqrt{b^2-x^2}}{2\pi^2}P\int_{-b}^{b}\frac{\mathrm{v}'(x)-\mathrm{v}'(y)}{(x-y)\sqrt{b^2-y^2}}dy.
\ee
Then the normalization condition (\ref{con}) becomes
\be\label{sup2}
\frac{1}{2\pi}\int_{-b}^{b}\frac{x\:\mathrm{v}'(x)}{\sqrt{b^2-x^2}}dx=n.
\ee
Substituting (\ref{vp}) into (\ref{sigma}), we obtain
\be\label{den}
\sigma (x)=\frac{\sqrt{b^2-x^2} }{\pi }\left[1-\frac{\la  t}{\sqrt{1+b^2 t} \left(1+t x^2\right)}\right].
\ee
Obviously, the condition $\la t\leq 1\; (t>0)$ guarantees that the above density is nonnegative on $(-b,b)$.

Substituting (\ref{vpz}) for $\mathrm{v}'(x)$ into (\ref{sup2}) produces an equation satisfied by $b$,
\be\label{cub}
b^2-2\la+\frac{2\la}{\sqrt{1+b^2 t}}=2n.
\ee
It can be easily seen that
\be\label{cond}
b^2<2n+2\la\quad \mathrm{if}\quad  \la>0\quad  \mathrm{and}\quad  b^2>2n+2\la\quad  \mathrm{if}\quad  \la<0.
\ee
Equation (\ref{cub}) is actually a cubic equation for $b^2$, which has a unique solution under the condition (\ref{cond}). Then we find that as $n\rightarrow\infty$,
\begin{small}
\bea\label{b2}
b^2&=&2n+2\la-\frac{\sqrt{2}\:\la}{\sqrt{nt}}+\frac{\la  (2 \la  t+1)}{2 \sqrt{2}\: (nt)^{3/2}}-\frac{\la ^2}{2 n^2 t}-\frac{3 \la  (2 \la  t+1)^2}{16 \sqrt{2}\: (nt)^{5/2}}+\frac{\la ^2 (2 \la  t+1)}{2 n^3 t^2}\nonumber\\[10pt]
&+&\frac{5 \la   (8 \la ^3 t^3+4 \la ^2 t^2+6 \la  t+1)}{64 \sqrt{2}\: (nt)^{7/2}}-\frac{3 \la ^2 (2 \la  t+1)^2}{8 n^4 t^3}-\frac{35 \la  (4 \la^2  t^2-1) (4 \la ^2 t^2-8 \la  t-1)}{1024 \sqrt{2}\: (nt)^{9/2}}\nonumber\\[10pt]
&+&\frac{\la ^2 (4 \la  t+1) (2 \la ^2 t^2+2 \la  t+1)}{4 n^5 t^4}+\frac{63 \la  (2 \la  t+1)^2 (8 \la ^3 t^3-68 \la ^2 t^2+6 \la  t+1)}{4096 \sqrt{2}\: (nt)^{11/2}}+O(n^{-6}).
\eea
\end{small}
\begin{remark}
When $\la=0$, from (\ref{den}) and (\ref{cub}) we have
$$
\sigma (x)=\frac{\sqrt{b^2-x^2} }{\pi },\qquad x\in (-b,b)
$$
and $b=\sqrt{2n}$. This is the celebrated Wigner's semicircle law \cite[p. 67]{Mehta}.
\end{remark}

From (\ref{ie}) and following the similar calculations in \cite[Lemma 3]{Min2021}, we find
\be\label{ae}
A=\frac{b^2}{2}-n\ln\frac{b^2}{4}-2\la\ln\frac{1+\sqrt{1+b^2 t}}{2}.
\ee
Substituting (\ref{pt}), (\ref{den}) and (\ref{ae}) into (\ref{fe3}) and using (\ref{b2}), we obtain as $n\rightarrow\infty$,
\bea\label{fe}
F[\s]&=&-\frac{1}{2} n^2\ln n-n\la\ln n-\frac{1}{2} \la ^2 \ln n+\left(\frac{3}{4}+\frac{\ln 2}{2}\right)n^2 +n\la\left(1+\ln\frac{2}{t}\right)-\frac{2 \sqrt{2n} \la }{\sqrt{t}}\nonumber\\[10pt]
&+&\frac{\la\left(2+\la t\ln\frac{8}{t}\right)}{2t}-\frac{\la  (6 \la  t+1)}{3 \sqrt{2n}\: t^{3/2}}-\frac{\la ^2 (2 \la  t-3)}{12 n t}+\frac{\la   \left[20 \la  t (3 \la  t+1)+3\right]}{120 \sqrt{2}\:n^{3/2} t^{5/2}}\nonumber\\[10pt]
&+&\frac{\la ^2 \left(2 \la ^2 t^2-12 \la  t-3\right)}{48 n^2 t^2}+O(n^{-5/2}).
\eea
\begin{remark}
Inserting (\ref{b2}) into (\ref{ae}), we have as $n\rightarrow\infty$,
\bea\label{A}
A&=&n\left(1+\ln\frac{2}{n}\right)+\la\ln\frac{2}{nt}-\frac{\sqrt{2}\la}{\sqrt{nt}}-\frac{\la^2}{2n}+\frac{\la(1+6\la t)}{6\sqrt{2}\:(nt)^{3/2}}+\frac{\la ^2 (2 \la  t-3)}{12 n^2 t}\nonumber\\[10pt]
&-&\frac{ \la  \left[20 \la  t (3 \la  t+1)+3\right]}{80 \sqrt{2}\: (nt)^{5/2}}+\frac{\la ^2 \left[3-2 \la  t (\la  t-6)\right]}{24 n^3 t^2}+O(n^{-7/2}).
\eea
It is easy to check that (\ref{fe}) and (\ref{A}) satisfy the relation \cite[(2.14)]{ChenIsmail}
$$
\frac{\partial F}{\partial n}=A.
$$
\end{remark}

According to formula (2.27) in \cite{ChenIsmail}, we have
$$
\bt_n=\frac{b^2}{4}\left(1+O\left(\frac{\partial^{4}F}{\partial n^{4}}\right)\right),\qquad n\rightarrow\infty.
$$
In view of (\ref{b2}) and (\ref{fe}), it is obvious to see that $\bt_n$ has the large $n$ expansion of the form
\be\label{exp}
\bt_n=a_{-2}n+a_{-1}\sqrt{n}+a_0+\sum_{k=1}^{\infty}\frac{a_{k}}{n^{k/2}}, \qquad\qquad n\rightarrow\infty,
\ee
where
$$
a_{-2}=\frac{1}{2}
$$
and $a_{k},\; k=-1, 0, 1, \ldots$ are the expansion coefficients to be determined. By using the second-order difference equation satisfied by $\bt_n$, we obtain the following theorem.
\begin{theorem}\label{thm}
The recurrence coefficient $\bt_n$ has the following large $n$ expansion:
\be\label{bte}
\bt_n=\frac{n+\la}{2}+\sum_{k=1}^{\infty}\frac{a_{k}}{n^{k/2}}, \qquad\qquad n\rightarrow\infty,
\ee
where the first few terms of expansion coefficients are
\bea
&&a_{1}=-\frac{\la }{2 \sqrt{2 t}},\qquad\qquad a_{2}=0,\qquad\qquad a_{3}=\frac{\la  (2 \la  t+1)}{8 \sqrt{2}\: t^{3/2}},\qquad\qquad a_{4}=-\frac{\la ^2}{8 t},\nonumber\\[8pt]
&&a_{5}=-\frac{\la  \left[(12 \la ^2-5) t^2+12 \la  t+3\right]}{64 \sqrt{2}\: t^{5/2}},\qquad\qquad a_{6}=\frac{\la ^2(2 \la  t+1)}{8 t^2},\nonumber\\[8pt]
&&a_{7}=\frac{5 \la  \left[2 \la  (4 \la ^2-5) t^3+(4 \la ^2-7) t^2+6 \la  t+1\right]}{256 \sqrt{2}\: t^{7/2}}.\nonumber
\eea
\end{theorem}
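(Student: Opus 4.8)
The plan is to substitute the Coulomb-fluid ansatz (\ref{exp}), with the leading coefficient $a_{-2}=\tfrac12$ already fixed, directly into the exact second-order difference equation (\ref{beta}) and to determine the remaining coefficients $a_{-1},a_0,a_1,a_2,\dots$ by matching powers of $n$ to zero. The only genuinely new ingredient needed is a way to handle the shifted quantities $\beta_{n\pm1}$: writing $\beta_{n\pm1}=\frac{n\pm1}{2}+a_{-1}(n\pm1)^{1/2}+a_0+\sum_{k\ge1}a_k(n\pm1)^{-k/2}$ and expanding each $(n\pm1)^{-k/2}$ by the binomial series $n^{-k/2}(1\pm n^{-1})^{-k/2}$ turns the left-hand side of (\ref{beta}) into a formal Laurent-type series in $n^{1/2}$ whose coefficients are polynomials in the $a_j$. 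Setting each such coefficient to zero produces, order by order, the equations that pin down the $a_j$.

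First I would extract the leading balances. Writing $\beta_n=\frac{n}{2}+a_{-1}n^{1/2}+a_0+S(n)$ with $S(n)=\sum_{k\ge1}a_kn^{-k/2}$, the two long factors in (\ref{beta}) collapse to combinations of $a_{-1}$, $a_0$ and sums of $S$ at neighbouring arguments; a short computation shows that the coefficient of $n^2$ equals $8t\,a_{-1}^2$, which forces $a_{-1}=0$, and that the coefficient of $n$ then equals $\tfrac{t}{2}(2\lambda-4a_0)^2$, which forces $a_0=\tfrac{\lambda}{2}$. Thus the leading profile is $\beta_n=\frac{n+\lambda}{2}+O(n^{-1/2})$, as in (\ref{bte}). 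With this substitution the first product in (\ref{beta}) becomes $-\lambda^2+4S(n)^2$ while the two long factors become $-2[S(n)+S(n+1)]$ and $-2[S(n-1)+S(n)]$, so the $O(1)$ term of the equation reduces to the quadratic $8t\,a_1^2-\lambda^2=0$. This determines $a_1$ up to sign, and the sign is fixed by comparison with the Coulomb-fluid expansion (\ref{b2}) through $\beta_n\approx b^2/4$, giving $a_1=-\frac{\lambda}{2\sqrt{2t}}$.

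Next I would set up the recursion for the higher coefficients. Once $a_1\neq0$ is known, at order $n^{-(k-1)/2}$ the coefficient $a_k$ enters the matched equation \emph{linearly}, with leading factor proportional to $t\,a_1$ (concretely, the cross term $16t\,a_1a_k$ coming from the product $[S(n)+S(n+1)][S(n-1)+S(n)]$ multiplied by $t\beta_n\sim\frac{tn}{2}$); all remaining contributions at that order are built from the already-known $a_j$ with $j<k$. Hence each $a_k$ is uniquely and linearly determined from its predecessors, and I would iterate through $k=7$ to obtain the explicit coefficients listed in the theorem. For instance the order-$n^{-1/2}$ equation reads $16t\,a_1a_2=0$ and gives $a_2=0$.

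The main obstacle is purely the bookkeeping: one must expand the shifted terms $\beta_{n\pm1}$ and the powers $(n\pm1)^{-k/2}$ consistently, collect the half-integer powers of $n$ up to the required order, and solve the resulting chain of linear equations, a computation that is routine in principle but heavy enough that computer algebra is the natural tool. Two structural points should be checked along the way: that the ansatz (\ref{exp}) is self-consistent, i.e. no power of $n$ outside the assumed set is generated at any order, and that the linear coefficient $16t\,a_1$ never vanishes, which holds because $t>0$ and $a_1\neq0$ whenever $\lambda\neq0$ (the degenerate case $\lambda=0$ being trivial, since then $w$ is Gaussian and $\beta_n=\tfrac{n}{2}$ exactly, consistent with $a_k=0$ for all $k\ge1$).
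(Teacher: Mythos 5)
Your proposal is correct and follows essentially the same route as the paper: substitute the ansatz (\ref{exp}) into the difference equation (\ref{beta}), expand the shifted terms $\beta_{n\pm1}$, and set the coefficient of each power of $n^{1/2}$ to zero, obtaining $a_{-1}=0$ from $8t\,a_{-1}^{2}=0$, $a_{0}=\lambda/2$ from $2t(2a_{0}-\lambda)^{2}=0$, and the higher $a_{k}$ recursively. Your two additions — fixing the sign of $a_{1}$ (the root of $8t\,a_{1}^{2}=\lambda^{2}$) by comparison with the Coulomb-fluid expansion (\ref{b2}) via $\beta_{n}\approx b^{2}/4$, and noting that each subsequent $a_{k}$ enters linearly with nonvanishing coefficient $16t\,a_{1}$ — are correct and make explicit two points the paper's proof leaves implicit.
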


\begin{proof}
Substituting (\ref{exp}) into the difference equation (\ref{beta}), we have an expression of the following form by letting $n\rightarrow\infty$:
$$
e_{-6}n^3+e_{-5}n^{5/2}+e_{-4}n^{2}+\sum_{k=-3}^{\infty}\frac{e_{k}}{n^{k/2}}=0,
$$
where all the coefficients of powers of $n$, $e_{k}$, are explicitly known and should be equal to $0$.
The equation $e_{-6}=0$ is
$$
t a_{-2}(2-4a_{-2})^2=0,
$$
which holds from the fact that $a_{-2}=\frac{1}{2}$.

Setting $a_{-2}=\frac{1}{2}$ leads to $e_{-5}$ vanishing identically. The equation $e_{-4}=0$ then gives rise to
$$
8t\: a_{-1}^{2}=0.
$$
Since $t>0$, we have
$$
a_{-1}=0.
$$
With $a_{-2}=\frac{1}{2}$ and $a_{-1}=0$, we find that $e_{-3}$ vanishes identically. The equation $e_{-2}=0$ gives us
$$
2t(2a_0-\la)^2=0,
$$
and we get
$$
a_0=\frac{\la}{2}.
$$
Proceeding with this procedure, we can easily obtain the higher order coefficients $a_{1},\; a_{2},\; a_{3},\ldots.$
The proof is complete.
\end{proof}

\begin{theorem}\label{thm1}
The sub-leading coefficient $\mathrm{p}(n,t)$ has the following expansion as $n\rightarrow\infty$:
\be\label{pnte}
\mathrm{p}(n,t)=-\frac{n^2}{4}+\frac{(1-2 \la)n}{4}+\frac{\la \sqrt{n}}{\sqrt{2 t}}+\frac{\la\left[(1-\la)t-2\right]}{4 t}+\sum_{k=1}^{\infty}\frac{b_{k}}{n^{k/2}},
\ee
where the first few terms of expansion coefficients are
\bea
&&b_{1}=\frac{\la\left[(2\la-1)t+1\right]}{4 \sqrt{2}\: t^{3/2}},\qquad\qquad\qquad\qquad\qquad\;\; b_{2}=-\frac{\la ^2}{8 t},\nonumber\\[8pt]
&&b_{3}=\frac{\la\left[(1+4\la-4\la^2)t^2+2(1-2\la)t-1\right]}{32 \sqrt{2}\: t^{5/2}},\qquad b_{4}=\frac{\la^2\left[(2\la-1)t+1\right]}{16 t^2},\nonumber\\[8pt]
&&b_{5}=\frac{\la  \left[(2 \la -1) (4 \la ^2-4 \la -5) t^3+(4 \la ^2-12 \la -5) t^2+3(2 \la -1) t+1\right]}{128 \sqrt{2}\: t^{7/2}}.\nonumber
\eea
\end{theorem}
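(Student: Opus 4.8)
The plan is to transfer the already-established expansion of $\bt_n$ in Theorem \ref{thm} to $\mathrm{p}(n,t)$ through the relations (\ref{be1}) and (\ref{sum}), and then to fix the one piece of data that these telescoping relations cannot see by appealing to the exact second-order difference equation (\ref{pnd}). First I would justify the form of the ansatz (\ref{pnte}). Summing (\ref{bte}) via $\mathrm{p}(n,t)=-\sum_{j=0}^{n-1}\bt_j$ in (\ref{sum}), the leading piece $\frac{j+\la}{2}$ produces the $-n^2/4$ and linear terms, the $a_1 j^{-1/2}$ term produces the $\sqrt n$ term, and the remaining half-integer powers produce a constant together with a tail in negative half-integer powers of $n$. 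Crucially $a_2=0$ in Theorem \ref{thm}, so no $j^{-1}$ term is summed and hence no $\ln n$ arises; this is what guarantees that (\ref{pnte}) is a pure (half-integer) power series.

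For the non-constant coefficients I would substitute the ansatz (\ref{pnte}) into the telescoping identity $\bt_n=\mathrm{p}(n,t)-\mathrm{p}(n+1,t)$ of (\ref{be1}) and expand in descending powers of $n$. Matching the $n^1$, $n^0$ and $n^{-1/2}$ coefficients against (\ref{bte}) reproduces $-n^2/4$, $\frac{(1-2\la)n}{4}$ and $\frac{\la\sqrt n}{\sqrt{2t}}$, while the leading contribution of the $b_k n^{-k/2}$ term to $\mathrm{p}(n,t)-\mathrm{p}(n+1,t)$ sits at order $n^{-(k+2)/2}$, so matching that order determines $b_k$ recursively from the $a_j$ with $j\le k+2$ (thus $b_1$ from $a_3$, $b_2$ from $a_4$, and so on). This part is routine and parallels the proof of Theorem \ref{thm}.

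The hard part is the additive constant $\frac{\la[(1-\la)t-2]}{4t}$. It is invisible to (\ref{be1}), since a constant cancels in $\mathrm{p}(n,t)-\mathrm{p}(n+1,t)$, and it cannot be recovered by naive Euler--Maclaurin summation of the $\bt_n$-expansion in (\ref{sum}): that procedure injects spurious constants of zeta type (e.g. $a_1\zeta(\tfrac12)$ from the $\sqrt n$-tail and $a_4\zeta(2)$ from the $n^{-2}$-tail), which only cancel against the unknown small-$j$ contributions to leave the clean rational expression. To pin the constant down one needs a genuinely non-telescoping relation. The cleanest is the exact second-order difference equation (\ref{pnd}), which contains the undifferenced term $-2t\,\mathrm{p}(n,t)$: substituting the ansatz with the constant as the only remaining unknown and matching at the order where this term first contributes yields a linear equation that fixes it. Equivalently, one may compute $H_n(t)$ from the free-energy expansion (\ref{fe}) via (\ref{hd})---the $t$-derivative annihilates the $t$-independent undetermined constants in $\ln D_n(t)$---and then read off $\mathrm{p}(n,t)$, constant included, from (\ref{p3}); a leading-order check of this route already returns $H_n\sim -2\la\sqrt{2nt}+\la^2 t+2\la$, consistent with the claimed constant.

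I expect this determination of the constant, rather than the coefficient matching, to be the main obstacle, both conceptually (explaining why termwise summation fails) and in the bookkeeping of locating the first nonvanishing order of (\ref{pnd}) in which the non-telescoped $-2t\,\mathrm{p}(n,t)$ enters. Once the full expansion (\ref{pnte}) is in hand, the $\s$-form statement of Theorem \ref{pnt} requires no further asymptotic work, only the substitution $t=1/s$, $\mathrm{p}(n,t)=\s_n(s)-\frac{n(n-1)}{4}$.
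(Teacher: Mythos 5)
Your proposal is correct, but it takes a genuinely different route from the paper. The paper's proof is a one-step affair: it takes the identity (\ref{iden}), eliminates $r_n(t)$ and $R_n(t)$ via (\ref{rne}) and (\ref{Rne}), and \emph{solves for} $\mathrm{p}(n,t)$, obtaining the closed form $\mathrm{p}(n,t)=\frac{n+\bt_n(2nt+2\la t+t-2-2t\bt_{n-1}-2t\bt_n-2t\bt_{n+1})}{2t}$; substituting the expansion (\ref{bte}) then yields every coefficient of (\ref{pnte}) at once, constant included, with no telescoping and no separate argument for the constant term. You instead split the work: the telescoping relation (\ref{be1}) fixes all non-constant coefficients (your recursion $b_k\leftarrow a_{k+2}$ checks out, e.g.\ $\frac{c_{1/2}}{8}+\frac{b_1}{2}=a_3$ reproduces the stated $b_1$), and you correctly diagnose that the additive constant is invisible to (\ref{be1}) and that naive termwise summation of (\ref{sum}) injects zeta-type constants, so a non-telescoping relation is needed. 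Your choice of (\ref{pnd}) for this purpose is sound --- it does contain the undifferenced piece $-2t\,\mathrm{p}(n)$ --- but note that (\ref{pnd}) is itself manufactured from (\ref{iden}) together with (\ref{re3}), so your route is essentially a disguised and more laborious version of the paper's: you re-derive from the quadratic consequence what the paper reads off from the linear antecedent. Your fallback via $H_n(t)$ from the free energy and (\ref{p3}) also returns the right constant, but it leans on the Coulomb-fluid expansion being exact at $O(1)$, which the paper deliberately avoids by anchoring everything to the difference equations; as a consistency check it is fine, as a proof it would need that extra justification. What your approach buys is a transparent explanation of \emph{why} the constant is the delicate datum; what the paper's buys is brevity and a single uniform determination of all coefficients.
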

\begin{proof}
Using (\ref{rne}) and (\ref{Rne}) to eliminate $r_n(t)$ and $R_n(t)$ in (\ref{iden}), we find $\mathrm{p}(n,t)$ can be expressed in terms of $\bt_n$ and $\bt_{n\pm1}$ as follows:
$$
\mathrm{p}(n,t)=\frac{n+\bt_n(2 n t+2 \lambda  t+t-2-2 t \bt_{n-1}-2t\bt_n-2 t \bt_{n+1})}{2 t}.
$$
Substituting (\ref{bte}) into the above and taking a large $n$ limit, we obtain the desired result.
\end{proof}
\begin{theorem}
The quantity $H_n(t)=2t^2\frac{d}{dt}\ln D_n(t)-2n\la t$ has the following asymptotic expansion as $n\rightarrow\infty$:
\bea\label{snt}
H_n(t)&=&-2 \la\sqrt{2nt}+\la  (\la  t+2)-\frac{\la   (2 \la  t+1)}{\sqrt{2nt}}+\frac{\la ^2}{2 n}+\frac{\la  \left[(4 \la ^2-1) t^2+4 \la  t+1\right]}{8 \sqrt{2}\: (nt)^{3/2}}\nonumber\\
&-&\frac{\la ^2 (2 \la  t+1)}{4 n^2 t}-\frac{ \la  \left[2\la(4 \la ^2-3) t^3+(4 \la ^2-5) t^2+6 \la  t+1\right]}{32 \sqrt{2}\: (nt)^{5/2}}+O(n^{-3}).
\eea
\end{theorem}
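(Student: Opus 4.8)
The plan is to exploit the purely algebraic relation (\ref{p3}),
$$
H_n(t)=2t\bt_n-4t\,\mathrm{p}(n,t)-nt(n+2\la),
$$
which already expresses $H_n(t)$ in closed form through the recurrence coefficient $\bt_n$ and the sub-leading coefficient $\mathrm{p}(n,t)$. Since Theorem \ref{thm} and Theorem \ref{thm1} furnish the complete large $n$ expansions (\ref{bte}) and (\ref{pnte}) of these two quantities, the desired expansion of $H_n(t)$ should follow by direct substitution, with no further analytic input required.

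First I would insert (\ref{bte}) and (\ref{pnte}) into the right-hand side of (\ref{p3}) and organise the result by decreasing powers of $n^{1/2}$. The crucial check is that the genuinely large contributions cancel: the $n^2$ coefficients from $-4t\,\mathrm{p}(n,t)$ and from $-nt(n+2\la)$ must annihilate one another, and likewise the coefficients of $n$ coming from $2t\bt_n$, from the $\tfrac{(1-2\la)n}{4}$ term of $\mathrm{p}(n,t)$, and from $-2\la t n$ must sum to zero. Both cancellations are forced by the leading behaviours $\bt_n\sim\tfrac{n+\la}{2}$ and $\mathrm{p}(n,t)\sim-\tfrac{n^2}{4}+\tfrac{(1-2\la)n}{4}$, so that $H_n(t)$ begins only at order $\sqrt{n}$. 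The $\sqrt{n}$ term is then produced solely by the $\tfrac{\la\sqrt{n}}{\sqrt{2t}}$ contribution of $\mathrm{p}(n,t)$, giving $-4t\cdot\tfrac{\la}{\sqrt{2t}}\sqrt{n}=-2\la\sqrt{2nt}$, while the constant $\la(\la t+2)$ is assembled from $t\la$ and $-\la[(1-\la)t-2]$.

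Beyond these the computation is bookkeeping: the coefficient of $n^{-k/2}$ in $H_n(t)$ is exactly $2t\,(a_k-2b_k)$, so each term of the claimed expansion is obtained by combining the tabulated coefficients $a_k$ of Theorem \ref{thm} with the $b_k$ of Theorem \ref{thm1}. To reach the stated accuracy $O(n^{-3})$ it suffices to use $a_k$ and $b_k$ for $1\le k\le 5$, all of which are listed. The only real obstacle is the volume and care of this algebra --- keeping track of the half-integer powers of $n$ and the accompanying powers of $t$, and verifying that the two polynomial-in-$n$ cancellations are exact rather than merely approximate; there is no structural difficulty, since every ingredient has already been established.
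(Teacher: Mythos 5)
Your proposal is correct and follows essentially the same route as the paper: both rest on substituting previously established large $n$ expansions into the algebraic identity (\ref{p3}), with the only cosmetic difference that the paper first uses $\bt_n=\mathrm{p}(n,t)-\mathrm{p}(n+1,t)$ to work solely with the expansion of $\mathrm{p}$, whereas you keep $\bt_n$ and invoke Theorem \ref{thm} directly. Your cancellation checks at orders $n^2$ and $n$ and the formula $2t(a_k-2b_k)$ for the coefficient of $n^{-k/2}$ are all consistent with the stated result.
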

\begin{proof}
Inserting (\ref{be1}) into (\ref{p3}), we have
\be\label{hpp}
H_n(t)=-2t\:\mathrm{p}(n,t)-2t\:\mathrm{p}(n+1,t)-nt(n+2\la).
\ee
Substituting (\ref{pnte}) into (\ref{hpp}) and letting $n\rightarrow\infty$, we obtain (\ref{snt}).
\end{proof}
\begin{theorem}\label{thm2}
The Hankel determinant $D_n(t)$ has the following asymptotic expansion as $n\rightarrow\infty$:
\bea\label{dnta}
\ln D_n(t)&=&\frac{1}{2} n^2 \ln n+n\la  \ln n-\left(\frac{1}{12}-\frac{\la ^2}{2}\right) \ln n-n^2 \left(\frac{3}{4}+\frac{\ln 2}{2}\right)+n(\la\ln t+\tilde{c}_{1}(\la))\nonumber\\[10pt]
&+&\frac{2\la\sqrt{2n}}{\sqrt{t}}-\frac{\la}{t}+\frac{\la^2}{2}\ln t+\tilde{c}_0(\la)+\frac{\la  (6 \la  t+1)}{3 \sqrt{2n}\: t^{3/2}}-\frac{\la  \left[3 \la+(1 -2 \la ^2) t\right]}{12 n t}\nonumber\\[10pt]
&-&\frac{\la  \left[15 (4 \la ^2-1) t^2+20 \la  t+3\right]}{120 \sqrt{2}\: n^{3/2} t^{5/2}}+O(n^{-2}),
\eea
where $\tilde{c}_0(\la)$ and $\tilde{c}_{1}(\la)$ are constants depending on $\la$ only.
\end{theorem}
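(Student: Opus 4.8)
The plan is to obtain (\ref{dnta}) in two stages. The Coulomb fluid computation has already produced the free energy expansion (\ref{fe}), and to leading order $\ln D_n(t)\approx -F[\s]$; reading off (\ref{fe}) therefore dictates the \emph{shape} of the expansion, namely that $\ln D_n$ is a series in the scales $n^2\ln n,\ n\ln n,\ \ln n,\ n^2,\ n,\ \sqrt n,\ 1,\ n^{-1/2},\dots$ with $t$-dependent coefficients, and it supplies the leading terms. The second stage promotes this to the full expansion by exploiting the exact relation (\ref{re}). Taking logarithms there gives
\be
\ln\bt_n=\ln D_{n+1}(t)-2\ln D_n(t)+\ln D_{n-1}(t),
\ee
so $\ln\bt_n$ is exactly the second central difference of $\ln D_n$ in $n$. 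Since the expansion (\ref{bte}) of $\bt_n$ is known, I would insert the above ansatz for $\ln D_n$ with undetermined coefficients, expand the right-hand side as $n\to\infty$, and match it against the expansion of $\ln\bt_n$ term by term.

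Concretely, I would first expand $\ln\bt_n$ from (\ref{bte}): writing $\bt_n=\tfrac{n+\la}{2}+a_1 n^{-1/2}+a_3 n^{-3/2}+\cdots$ and taking logarithms gives
\be
\ln\bt_n=\ln n-\ln 2+\frac{\la}{n}+\frac{2a_1}{n^{3/2}}-\frac{\la^2}{2n^2}+\cdots.
\ee
I would then compute the second difference of each term in the ansatz by means of $f(n+1)-2f(n)+f(n-1)=f''(n)+\tfrac{1}{12}f^{(4)}(n)+\cdots$. Matching is then mechanical: the $\ln n$ on both sides fixes the coefficient of $n^2\ln n$ to be $\tfrac12$; the constant $-\ln2$, together with the $\tfrac32$ coming from the second difference of $\tfrac12 n^2\ln n$, fixes the coefficient of $n^2$ to be $-\tfrac34-\tfrac{\ln2}{2}$; the $n^{-1}$ term fixes the coefficient of $n\ln n$ to be $\la$; matching the $n^{-3/2}$ term against $2a_1$ fixes the coefficient of $\sqrt n$ (and, since $\ln\bt_n$ has no $n^{-1/2}$ term, shows that no $n^{3/2}$ term is present); and the higher corrections follow in the same way, their $t$-dependence being inherited from the coefficients $a_k(t)$.

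The delicate step, and the one I expect to be the main obstacle, is the coefficient of $\ln n$. The continuum free energy (\ref{fe}) predicts the value $\tfrac{\la^2}{2}$, but this is not the correct coefficient in (\ref{dnta}). The discrepancy is recovered only from the discreteness of $n$: the Euler--Maclaurin correction $\tfrac{1}{12}f^{(4)}$ applied to $\tfrac12 n^2\ln n$ contributes $-\tfrac{1}{12}n^{-2}$, while a term $C\ln n$ in $\ln D_n$ contributes $-Cn^{-2}$, and matching their sum against the $-\tfrac{\la^2}{2}n^{-2}$ in $\ln\bt_n$ forces $C=\tfrac{\la^2}{2}-\tfrac1{12}$ --- precisely the $-\tfrac1{12}$ shift away from the naive free-energy value. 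Getting this term right requires carrying the second difference faithfully to order $n^{-2}$, which is the bookkeeping I would be most careful about.

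Finally I would treat the two coefficients that the second difference cannot see. Because any $f(n)=cn+d$ satisfies $f(n+1)-2f(n)+f(n-1)=0$, the relation (\ref{re}) leaves the coefficient of $n$ and the constant term of $\ln D_n$ entirely free; these are the two undetermined constants of (\ref{dnta}). Their $t$-dependence, however, is fixed independently by (\ref{hd}): integrating $\tfrac{d}{dt}\ln D_n=(H_n+2n\la t)/(2t^2)$ in $t$, with $H_n$ given by (\ref{snt}), the $2n\la t$ term contributes $\la\ln t$ to the coefficient of $n$, while the $n$-independent part $\la(\la t+2)$ of $H_n$ contributes $-\tfrac{\la}{t}+\tfrac{\la^2}{2}\ln t$ to the constant term, in agreement with (\ref{fe}). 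What cannot be reached by either mechanism is the purely $t$-independent, $\la$-dependent piece of each, and these I would record as the undetermined constants $\tilde c_1(\la)$ and $\tilde c_0(\la)$, completing (\ref{dnta}).
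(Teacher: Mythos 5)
Your proposal is correct and follows essentially the same route as the paper: take the Coulomb fluid free energy (\ref{fe}) as an ansatz for the form of $\ln D_n(t)$, fix the coefficients (including the $-\tfrac{1}{12}$ shift in the $\ln n$ term) by matching the second central difference against $\ln\bt_n$ via (\ref{re}) and (\ref{bte}), and then pin down the $t$-dependence of the $n$-linear and constant terms by comparing $\tfrac{d}{dt}\ln D_n$ with the expansion (\ref{snt}) of $H_n(t)$ through (\ref{hd}). Your explicit identification of where the Euler--Maclaurin correction enters is a nice articulation of a step the paper leaves implicit, but the argument is the same.
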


\begin{proof}
We define
$$
F_n(t):=-\ln D_n(t)
$$
to be the ``free energy''. Chen and Ismail \cite{ChenIsmail} showed that $F_n(t)$ is approximated by the free energy $F[\s]$ in (\ref{fe1}) for sufficiently large $n$. It was also pointed out in \cite{ChenIsmail1998} that the approximation is very accurate and effective.
Taking account of (\ref{fe}),
we have the following large $n$ expansion form for $F_n(t)$:
\be\label{fna}
F_n(t)=c_{5}(t,\la)n^2\ln n+c_4(t,\la) n \ln n+c_3(t,\la) \ln n+\sum_{j=-\infty}^{4}c_{j/2}(t,\la)n^{j/2}.
\ee

From (\ref{re}) we find
\be\label{dc}
-\ln\bt_n=F_{n+1}(t)+F_{n-1}(t)-2F_n(t).
\ee
Substituting (\ref{bte}) and (\ref{fna}) into equation (\ref{dc}) and letting $n\rightarrow\infty$, we obtain the asymptotic expansion for $F_n(t)$ by equating coefficients of powers of $n$:
\bea
F_n(t)&=&-\frac{1}{2} n^2 \ln n-n\la  \ln n+\left(\frac{1}{12}-\frac{\la ^2}{2}\right) \ln n+n^2 \left(\frac{3}{4}+\frac{\ln 2}{2}\right)+c_1(t,\la)n
\nonumber\\[6pt]
&-&\frac{2\la\sqrt{2n}}{\sqrt{t}}+c_0(t,\la)-\frac{\la  (6 \la  t+1)}{3 \sqrt{2n}\: t^{3/2}}+\frac{\la  \left[3 \la+(1 -2 \la ^2) t\right]}{12 n t}\nonumber\\[6pt]
&+&\frac{\la  \left[15 (4 \la ^2-1) t^2+20 \la  t+3\right]}{120 \sqrt{2}\: n^{3/2} t^{5/2}}+O(n^{-2}),\no
\eea
where $c_{1}(t,\la)$ and $c_0(t,\la)$ are undetermined coefficients independent of $n$.
It follows that
\bea\label{dn}
\ln D_n(t)&=&\frac{1}{2} n^2 \ln n+n\la  \ln n-\left(\frac{1}{12}-\frac{\la ^2}{2}\right) \ln n-n^2 \left(\frac{3}{4}+\frac{\ln 2}{2}\right)-c_1(t,\la)n\nonumber\\[6pt]
&+&\frac{2\la\sqrt{2n}}{\sqrt{t}}-c_0(t,\la)+\frac{\la  (6 \la  t+1)}{3 \sqrt{2n}\: t^{3/2}}-\frac{\la  \left[3 \la+(1 -2 \la ^2) t\right]}{12 n t}\nonumber\\[6pt]
&-&\frac{\la  \left[15 (4 \la ^2-1) t^2+20 \la  t+3\right]}{120 \sqrt{2}\: n^{3/2} t^{5/2}}+O(n^{-2}).
\eea

Taking a derivative with respect to $t$ in (\ref{dn}) and substituting it into (\ref{hd}), we find
\bea\label{com}
H_n(t)&=&-2n t \left(\la +t\frac{d}{dt} c_1(t,\la)\right)-2\la \sqrt{2nt}-2 t^2\frac{d}{dt} c_0(t,\la)-\frac{\la  (2 \la  t+1)}{\sqrt{2nt} }\nonumber\\[6pt]
&+&\frac{\la ^2}{2 n}+\frac{\la  \left[(4 \la ^2-1) t^2+4 \la  t+1\right]}{8 \sqrt{2}\:  (nt)^{3/2}}+O(n^{-2}).
\eea
Making a comparison between (\ref{com}) and (\ref{snt}) gives
$$
\la +t\frac{d}{dt} c_1(t,\la)=0,
$$
$$
-2 t^2\frac{d}{dt} c_0(t,\la)=\la  (\la  t+2).
$$
We then have
\be\label{c1}
c_{1}(t,\la)=-\la\ln t-\tilde{c}_{1}(\la),
\ee
\be\label{c0}
c_0(t,\la)=\frac{\la}{t}-\frac{\la^2}{2}\ln t-\tilde{c}_0(\la),
\ee
where $\tilde{c}_0(\la)$ and $\tilde{c}_{1}(\la)$ are constants depending on $\la$ only. The theorem follows from
inserting (\ref{c1}) and (\ref{c0}) into (\ref{dn}).
\end{proof}

If $\la=0$, our weight is reduced to the standard Gaussian weight $w(x)=\mathrm{e}^{-x^2},\;x\in \mathbb{R}$. In this case, it is well known that \cite{Szego}
\be\label{btg}
\bt_n=\frac{n}{2},
\ee
\be\label{png}
\mathrm{p}(n)=-\frac{n^2}{4}+\frac{n}{4},
\ee
\be\label{hng}
h_n=\frac{n!}{2^n}\sqrt{\pi}.
\ee
Obviously, (\ref{btg}) and (\ref{png}) coincide with the results in Theorem \ref{thm} and Theorem \ref{thm1} respectively when $\la=0$.

From (\ref{hng}) we have the Hankel determinant for the standard Gaussian weight
$$
D_n=(2\pi)^{\frac{n}{2}}2^{-\frac{n^2}{2}}G(n+1),
$$
where $G(\cdot)$ is the Barnes $G$-function and satisfies the functional equation \cite{Barnes,Choi,Voros}
$$
G(z+1)=\Gamma(z)G(z),\qquad\qquad G(1):=1.
$$
It follows that
$$
\ln D_n=-\frac{n^2}{2}\ln 2+\frac{n}{2}\ln(2\pi)+\ln G(n+1).
$$

By using the asymptotic formula of Barnes $G$-function \cite[p. 285]{Barnes}
$$
\ln G(z+1)=z^2\left(\frac{\ln z}{2}-\frac{3}{4}\right)+\frac{z}{2}\ln (2\pi)-\frac{\ln z}{12}+\zeta'(-1)+O(z^{-2}),\qquad z\rightarrow+\infty,
$$
where $\zeta(\cdot)$ is the Riemann zeta function, we obtain
$$
\ln D_n=\frac{n^2}{2}\ln n-\left(\frac{\ln 2}{2}+\frac{3}{4}\right)n^2+n\ln(2\pi)-\frac{1}{12}\ln n+\zeta'(-1)+O(n^{-2}),\qquad n\rightarrow\infty.
$$
Setting $\la=0$ in (\ref{dnta}) and comparing it with the above, we have
$$
\tilde{c}_{1}(0)=\ln(2\pi),
$$
$$
\tilde{c}_{0}(0)=\zeta'(-1).
$$
However, the constants $\tilde{c}_0(\la)$ and $\tilde{c}_{1}(\la)$ in Theorem \ref{thm2} can still not be determined completely with our method.
\begin{remark}
The large $n$ asymptotic expansions of $\bt_n(t)$, $\mathrm{p}(n,t)$, $H_n(t)$ and $D_n(t)$ obtained in this section are only valid when $\la$ and $t$ satisfy the condition $\la t\leq 1,\; t>0$.
\end{remark}

\section*{Acknowledgments}
The work of Chao Min was partially supported by the National Natural Science Foundation of China under grant number 12001212, by the Fundamental Research Funds for the Central Universities under grant number ZQN-902 and by the Scientific Research Funds of Huaqiao University under grant number 17BS402. The work of
Yang Chen was partially supported by the Macau Science and Technology Development Fund under grant number FDCT 0079/2020/A2.


\end{document}